\newcounter{algsubstate}
\newenvironment{algsubstates}
  {\setcounter{algsubstate}{0}
   \renewcommand{\State}{
     \refstepcounter{algsubstate}
     \Statex {\footnotesize\alph{algsubstate}:}\space}}
  {}
\newcommand\numberthis{\addtocounter{equation}{1}\tag{\theequation}}
\begin{document}
\sloppy
\allowdisplaybreaks[1]

\newtheorem{thm}{Theorem} 
\newtheorem{lem}{Lemma}
\newtheorem{prop}{Proposition}
\newtheorem{cor}{Corollary}
\newtheorem{defn}{Definition}
\newcommand{\remarkend}{\IEEEQEDopen}
\newtheorem*{remark}{Remark}
\newtheorem{rem}{Remark}
\newtheorem{ex}{Example}
\newtheorem{pro}{Property}

\newenvironment{example}[1][Example]{\begin{trivlist}
\item[\hskip \labelsep {\bfseries #1}]}{\end{trivlist}}

\renewcommand{\qedsymbol}{ \begin{tiny}$\blacksquare$ \end{tiny} }

\renewcommand{\leq}{\leqslant}
\renewcommand{\geq}{\geqslant}

\title {\huge{Dual-Source SPIR over a noiseless MAC without Data Replication or Shared Randomness}}

\author{\IEEEauthorblockN{R\'emi A. Chou} \thanks{R. Chou is with the Department of Computer Science and Engineering, The University of Texas at Arlington, Arlington, TX. This work was supported in part by NSF grants CCF-2047913 and CCF-2401373. E-mail: remi.chou@uta.edu. A preliminary version of this work has been presented at the 2021 IEEE Information Theory Workshop \cite{chou2020} and the 2024 IEEE International Symposium on
Information Theory (ISIT)~\cite{chou2024dual}.}}

\maketitle
\begin{abstract}
Information-theoretically secure Symmetric Private Information Retrieval (SPIR) from a single server is known to be infeasible over noiseless channels. Known solutions to overcome this infeasibility involve additional resources such as database replication, shared randomness, or noisy channels. 
In this paper, we propose an alternative approach for achieving SPIR with information-theoretic security guarantees, without relying on shared randomness, noisy channels, or data replication. Specifically, we demonstrate that it is sufficient to use a noiseless binary adder multiple-access channel, where inputs are controlled by two non-colluding servers and the output is observed by the client, alongside a public noiseless communication channel between the client and the servers. Furthermore, in this setting, we characterize the optimal file rates, i.e., the file lengths normalized by the number of channel uses, that can be~transferred.
\end{abstract} 

\section{Introduction}

Consider a client who wishes to download a file from a server such that (i) their file selection is kept private from the server and (ii) the client does not learn any other information beyond the selected file. This setting is referred to as Symmetric Private Information Retrieval (SPIR). SPIR is also known as oblivious transfer \cite{rabin1981exchange,goldreich1987play}, which is anterior to SPIR and a fundamental cryptographic building block~{\cite{kilian1988founding}}.

Under information-theoretic security guarantees, which is the focus of this paper, it is well known, e.g.,~\cite{crepeau1997efficient}, that SPIR between a client and a single server is infeasible over a noiseless communication channel. To overcome this impossibility result, two approaches have  previously been considered. The first approach consists in considering additional resources at the client and server in the form of correlated randomness, which could, for instance, be obtained through a noisy channel between the client and the server. Specifically, for some classes of noisy channels, SPIR is  known to be feasible under information-theoretic security, e.g.,  \cite{crepeau1988achieving,crepeau1997efficient,nascimento2008oblivious,ahlswede2013oblivious,Shekofteh2025B}. The second approach consists in replicating data in multiple servers and assuming that the servers share randomness, e.g., \cite{gertner2000protecting,sun2018capacity,chen2020capacity,wang2019symmetric}. Note that with this second approach, a necessary assumption is that only a strict subset of servers can collude against the client, otherwise SPIR is infeasible as the setting reduces to the case of SPIR between a client and a single server over a noiseless channel. This paper aims to demonstrate the feasibility of a third approach.

In this paper, we propose to perform SPIR under information-theoretic security guarantees without shared randomness, a noisy channel, or data replication. Specifically, we consider SPIR between one client and two non-colluding servers, using a noiseless multiple-access channel, whose inputs are controlled by the servers and output is observed by the client, and a public communication channel between the client and the servers. The two servers have independent content so that no data is replicated. In this setting, the client aims to obtain one file from each server such that (i) the file selection remains private from the servers, (ii)~the unselected files remain unknown to the receiver, and (iii) one server does not learn anything about the content of the other server. As formally described in Section~\ref{secps}, in our setting,  the servers and the client can communicate over a noiseless public channel and have access to a noiseless binary adder multiple-access channel, but neither noisy channels nor pre-shared correlated randomness are available at the parties. While information-theoretically secure SPIR is impossible if the client engages in independent protocols with each of the servers, i.e., with one server at a time and without any intervention of the other server,  we show that a multiuser protocol between the client and the two servers can enable information-theoretically secure SPIR with positive rates for both servers simultaneously by leveraging the binary adder multiple-access channel, instead of shared randomness, a noisy channel, or data replication. The main idea of the achievability scheme is to use the communication of one sender as noisy resource for the other sender and the receiver. Perhaps surprisingly, this can be accomplished without time sharing between the two servers, i.e., the client can retrieve a file from each server simultaneously. More specifically, we begin by designing an achievability scheme for the case where each server holds only two files, and then build on this scheme to handle the case where each server has multiple files. A key challenge lies in the proof of the converse to establish the optimality of our approach. As a problem of independent interest, in the converse proof, we solve an optimization problem involving the distribution that maximizes the conditional entropy of two Bernoulli random variables given their sum. Our main results fully characterize the capacity region of this setting, defined as the set of all achievable rate pairs $(R_1, R_2)$, where $R_i$ denotes the retrieval rate from Server $i$, given by the ratio of the requested file size to the total number of channel uses. Unlike prior works on SPIR with replicated or coded data, which use the total download cost from the servers as the performance metric, our setting considers the number of channel uses of the binary adder multiple-access channel as the performance metric.

Note that a noiseless binary adder multiple access channel can be modeled by physical-layer schemes where simultaneous transmissions on a shared medium yield an output equal to the number of active users. Each user activity is represented as a binary input (0 if inactive, 1 if active) such that the receiver observes 0 when no one transmits, 2 when both transmit, and 1 when only one is active. In the last case, the receiver knows a single user sent a signal but cannot tell which one. This effect could be realized in several ways. For example, in simultaneous wireless transmissions with energy detection, each transmitter either emits a short signal or remains silent on the same frequency, allowing an energy detector at the receiver to determine how many transmitters were active. In power line communication  backscatter, devices modulate the reflection of an existing signal on the shared wiring so that the combined reflections reveal the number of active devices. In optical channels, each transmitter could emit light if active, and the signals can be combined at a photodetector so that the measured intensity indicates how many transmitters were active. Realizing this effect in practice requires at least transmitter synchronization, controlled signal amplitudes, and limited channel variability to preserve linear superposition.

\emph{Related works}: The crux of our approach is the use of a noiseless multiple-access channel.  By contrast, other recent works have considered SPIR over \emph{noisy} multiple-access channels, including \cite{shekofteh2024spir, aghaee2025oblivious,elimelech2025spir}. Compared with \cite{aghaee2025oblivious}, where the privacy constraint concerns the client selection with respect to individual servers, our work considers privacy of the client selection with respect to both servers jointly. Moreover, while~\cite{aghaee2025oblivious} focuses on two files, we consider an arbitrary number of files.
In \cite{elimelech2025spir}, the setting involves an arbitrary number of non-communicating and replicated servers over a Gaussian multiple-access channel, and    joint channel-SPIR schemes that do not require common randomness among the servers are proposed. Our setting also does not rely on common randomness, but unlike \cite{elimelech2025spir}, it avoids data replication and instead considers two servers with independent content.   In \cite{shekofteh2024spir}, for a specific class of noisy multiple-access channels, positive rates are achievable even when all servers collude, in the scenario where the file selection index is identical across servers, as may happen when files are distributed across multiple servers.

Similar to our setting, a noiseless binary adder multiple-access channel was also studied in \cite{pei2023capacity}. The main difference  is that \cite{pei2023capacity} considers replicated servers, whereas we do not allow data replication. Another difference is that  \cite{pei2023capacity}  focuses on the case of two files, whereas we consider a setting with an arbitrary number of files. We also note that prior related work includes  the study of PIR with non-communicating, replicated servers  over  a multiple-access channels formed by the product of  
 noisy point-to-point channels \cite{banawan2019noisy}, as well as over a Gaussian multiple-access channel in     \cite{shmuel2021private} and in \cite{elimelech2025pir}, in the case where  server responses are public.

The remainder of the paper is organized as follows. We formally introduce the setting in Section \ref{secps} and state our main results in Section \ref{secres}. We prove our converse and achievability results in Section \ref{secproof2} and~\ref{secproof3}, respectively. Finally, we provide concluding remarks in Section \ref{secconc}.

\section{Problem Statement} \label{secps}
Notation:
For $a,b\in \mathbb{R}$, define $\llbracket a,b\rrbracket \triangleq [ \lfloor a \rfloor, \lceil b \rceil]\cap \mathbb{N}$. For any $z \in \{ 1,2 \}$, define $\bar{z} \triangleq 3-z$. For a vector $X^n$ of length $n$, for any set $\mathcal{S} \subseteq \llbracket 1,n \rrbracket$, let $X^n[\mathcal{S}]$ denote the components of $X^{n}$ whose indices are in $\mathcal{S}$. Let $\oplus$ denote addition modulo two. In the following, all logarithms are taken base two. We also give a notation overview in Table \ref{tab:notation}.

\begin{table*}[h!]
\centering
\caption{Notation overview.} 
\begin{tabular}{|l|l|}
\hline
\textbf{Notation} & \textbf{Description} \\ \hline
$\llbracket a,b \rrbracket$ & Integer interval $\llbracket a,b\rrbracket \triangleq [ \lfloor a \rfloor, \lceil b \rceil]\cap \mathbb{N}$ \\ 
$\bar{z}$ & For $z \in \{1,2\}$, $\bar{z} \triangleq 3-z$ \\ 
$\oplus$ & Addition modulo 2  \\ 
$n$ & Number of channel uses \\ 
$(X_i)_t$ & Channel input of Server $i \in \{1,2\}$ at time $t \in \llbracket 1,n \rrbracket$ \\ 
$Y_t$ & Output of the binary adder MAC at time $t \in \llbracket 1,n \rrbracket$, $Y_t \triangleq (X_1)_t + (X_2)_t$ \\ 
$Y^t$ & Sequence of outputs up to time $t \in \llbracket 1,n \rrbracket$, $Y^t \triangleq (Y_i)_{i\in \llbracket 1 , t\rrbracket}$ \\ 
$X_i^n[\mathcal{S}]$ & Components of vector $X_i^n \triangleq ((X_i)_1, \dots, (X_i)_n)$ indexed by $\mathcal{S} \subseteq \llbracket 1,n \rrbracket$, $i\in\{1,2\}$ \\ 
$L_i$ & Number of files stored at Server $i \in \{1,2\}$ \\ 
$\mathcal{L}_i$ & Set of file indices for Server $i\in\{1,2\}$, $\mathcal{L}_i \triangleq \llbracket 1, L_i \rrbracket$, $i\in\{1,2\}$ \\ 
$F_{i,l}$ & File $l \in \mathcal{L}_i$ stored at Server $i\in\{1,2\}$, uniformly distributed over $\{0,1\}^{nR_i}$ \\ 
$U_0,U_1,U_2$ & Local randomness available at the client, Server 1, and Server 2, respectively \\ 
$Z_1,Z_2$ & Client's file selection for Server 1 and Server 2, respectively  \\ 
$\mathbf{A}$ & Total public communication \\ 
$\hat{F}_{i,Z_i}$ & Client's estimate of the requested file $F_{i,Z_i}$, $i\in\{1,2\}$ \\ 
\hline
\end{tabular}
\label{tab:notation}

\end{table*}

Consider a binary adder multiple-access channel (MAC) $(\mathcal{Y}, p_{Y|X_1X_2}, \mathcal{X}_1 \times \mathcal{X}_2)$ defined by 
\begin{align} \label{eqchannel}
Y \triangleq X_1 + X_2,
\end{align}
where  $Y \in \mathcal{Y} \triangleq \{ 0,1,2\}$ is the output and  $X_1$, $X_2\in  \mathcal{X}_1 \triangleq  \mathcal{X}_2 \triangleq \{ 0,1\}$ are the inputs. In the following, we assume that all the participants in the protocol strictly follow the protocol. The setting is summarized in Figure \ref{fig2} and formally described in the following definitions.

\begin{figure}
\centering
  \includegraphics[width=7.75 cm]{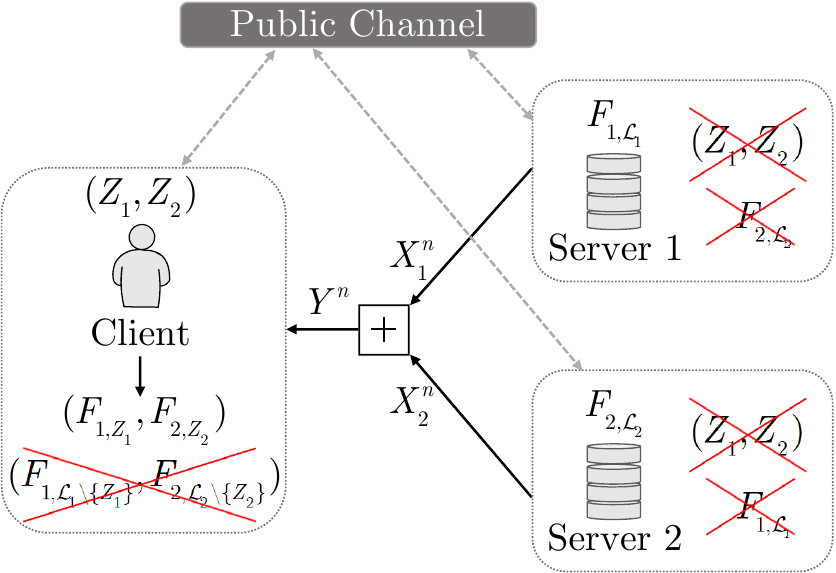}
  \caption{Dual-source symmetric private information retrieval between one client and two servers who have access to a public channel and a binary adder MAC as described in \eqref{eqchannel}. The file selection of the client is $(Z_1,Z_2)\in \mathcal{L}_1 \times \mathcal{L}_2$, i.e., the client wishes to obtain $(F_{1,Z_1}, F_{2,Z_2})$ from the servers with the constraints that the client must not learn information about $(F_{1,\mathcal{L}_1 \backslash\{{Z}_1\}}, F_{2,\mathcal{L}_2 \backslash\{{Z}_2\}})$,  $(Z_1,Z_2)$ must remain private from both servers, and Server $1$ must not learn information about Server $2$'s content, and vice versa for Server $2$.}
  \label{fig2}
\end{figure}
\begin{defn}
An $(n,L_1,L_2,2^{nR_1},2^{nR_2})$ dual-source SPIR protocol consists of
\begin{itemize}
\item Two non-colluding servers and one client;
\item For $i\in\{1,2\}$, $L_i$ independent random variables $(F_{i,l})_{l\in\mathcal{L}_i}$ uniformly distributed over $\{ 0,1\}^{nR_i}$, where $\mathcal{L}_i \triangleq \llbracket 1, L_i \rrbracket$, which represent $L_i$ files stored at Server~$i$; 
\item $U_0$, $U_1$, $U_2$, three independent random variables, which represent local randomness available at the client, Server~$1$, and Server~$2$, respectively;
\item $Z_1$, $Z_2$, two independent random variables uniformly distributed over $\mathcal{L}_1$ and $\mathcal{L}_2$, respectively, which represent the client file choice for Servers $1$, and $2$, i.e., $(Z_1,Z_2) =(i,j)$ means that the client is requesting the files $(F_{1,i},F_{2,j})$, where $(i,j) \in \mathcal{L}_1 \times \mathcal{L}_2$;
\end{itemize}
and operates as follows, 
\begin{itemize}
\item  For time $t=1$ to time $t =n$, the servers send $((X_1)_t,(X_2)_t) \in \mathcal{X}_1 \times \mathcal{X}_2$ over the binary adder MAC \eqref{eqchannel} and the client observes $Y_t \triangleq (X_1)_t+(X_2)_t$, where $(X_1)_t$ and $(X_2)_t$ are functions of $  (F_{1,\mathcal{L}_1},U_1)$ and $   (F_{2,\mathcal{L}_2},U_2)$, respectively. Define $Y^n \triangleq (Y_t)_{t\in \llbracket 1 , n\rrbracket}$.
\item Next, the servers and the client are allowed to communicate, possibly interactively in $r$ rounds of communication, over a noiseless channel. Specifically, for $j \in \llbracket 1 , r \rrbracket$, Servers~1 and~2 form and publicly send the messages $M_{1}(j)$ and $M_{2}(j)$, respectively, which are functions of $(F_{1,\mathcal{L}_1},U_1, (M_{0,1}(m))_{m \in \llbracket 1, j-1 \rrbracket })$ and $(F_{2,\mathcal{L}_2},U_2,(M_{0,2}(m))_{m \in \llbracket 1, j-1 \rrbracket })$, respectively.  The client forms and publicly sends the messages $M_{0,1}(j)$ and $M_{0,2}(j)$, which are functions of $(Z_1, U_0, Y^n, (M_{1}(m))_{m \in \llbracket 1, j \rrbracket })$ and $( Z_2,U_0, Y^n, (M_{2}(m))_{m \in \llbracket 1, j \rrbracket })$, respectively. 
\end{itemize} 
 Define  $\mathbf{A}_1 \triangleq (M_{0,1}(j),M_{1}(j))_{j \in \llbracket 1, r \rrbracket}$ and $\mathbf{A}_2 \triangleq (M_{0,2}(j),M_{2}(j))_{j \in \llbracket 1, r \rrbracket}$. The entire public communication is denoted by $\mathbf{A} \triangleq (\mathbf{A}_1,\mathbf{A}_2)$.
Finally, the client forms $\hat{F}_{1,Z_1}$, an estimate of $F_{1,Z_1}$, from $(Y^n,U_0,\mathbf{A}_1,Z_1)$, and $\hat{F}_{2,Z_2}$, an estimate of $F_{2,Z_2}$, from $(Y^n,U_0,\mathbf{A}_2,Z_2)$. 
\end{defn}

\begin{defn} \label{defmal}
A rate pair $(R_1, R_2)$ is achievable if there exists a sequence of $(n,L_1,L_2,2^{nR_1},2^{nR_2})$ dual-source SPIR protocols such that  
\begin{align}
&\lim_{n\to \infty} \mathbb{P}\left[(\widehat{F}_{1,Z_1},\widehat{F}_{2,Z_2}) \neq (F_{1,Z_1},F_{2,Z_2}) \right]  = 0 \label{eqre}\\    &\hspace*{12em} \text{(File recovery at Client)}, \nonumber \\
&\lim_{n\to \infty} I\left( F_{1, \mathcal{L}_1} X_1^n U_1 \mathbf{A} ; Z_1 Z_2\right)   = 0 \label{eqpra} \\ & \hspace*{12em}\text{(Client privacy w.r.t. Server 1)}, \nonumber \\
&\lim_{n\to \infty} I\left( F_{2, \mathcal{L}_2} X_2^n U_2\mathbf{A} ; Z_1 Z_2\right)   = 0  \label{eqprb} \\  &\hspace*{12em}\text{(Client privacy w.r.t. Server 2)},\nonumber \\
&\lim_{n\to \infty}I(F_{1, \mathcal{L}_1} X^n_{1 } U_1 \mathbf{A} ;F_{2, \mathcal{L}_{2}}  ) = 0 \label{eqps12} \\ &\hspace*{11em}\text{(Server 2 privacy w.r.t. Server 1)}, \nonumber\\
&\lim_{n\to \infty}I(F_{2, \mathcal{L}_2} X^n_{2 } U_2 \mathbf{A} ;F_{1, \mathcal{L}_{1}}  ) = 0 \label{eqps21}\\  &\hspace*{11em}\text{(Server 1 privacy w.r.t. Server 2)}, \nonumber \\
&\lim_{n\to \infty} I\left(  Z_1 Z_2 Y^n U_0 \mathbf{A}; F_{1, \mathcal{L}_1 \backslash\{{Z}_1\}} F_{2, \mathcal{L}_2 \backslash\{{Z}_2\}} \right)    = 0 \label{eqpr2}\\ &\hspace*{12em}\text{(Servers' privacy w.r.t. Client)}.  \nonumber
\end{align}
The set of all achievable rate pairs is called the dual-source SPIR  capacity region   and is denoted by~$\mathcal{C}_{\textup{SPIR}^2}(L_1,L_2)$.
\end{defn}

Equation~\eqref{eqre} ensures that the client obtains the selected files. Equations \eqref{eqpra} and \eqref{eqprb} ensure the client's privacy by keeping the file selection $(Z_1,Z_2)$  private from Server $1$ and Server~$2$, respectively. Note that $(F_{i, \mathcal{L}_i}, X_i^n, U_i, \mathbf{A})$ represents all the information available at Server $i\in \{1,2\}$ at the end of the protocol. Equation \eqref{eqps12} (respectively Equation \eqref{eqps21}) ensures Server $2$'s (respectively Server $1$'s) privacy with respect to Server $1$'s (respectively Server $2$'s). Equation~\eqref{eqpr2} ensures the servers' privacy by keeping all the non-selected files private from the client.  Note that $( Z_1, Z_2, Y^n, U_0 ,\mathbf{A})$ represents all the information available to the client at the end of the protocol.

Note that, if the servers are colluding, then no positive rates are achievable, as the setting reduces to an SPIR setting between one client and one server over a noiseless channel, for which it is known that information-theoretic security cannot be achieved, e.g., \cite{sun2018capacity}.

\section{Main Results} \label{secres}
\begin{figure}
\centering
  \includegraphics[width=7.1 cm]{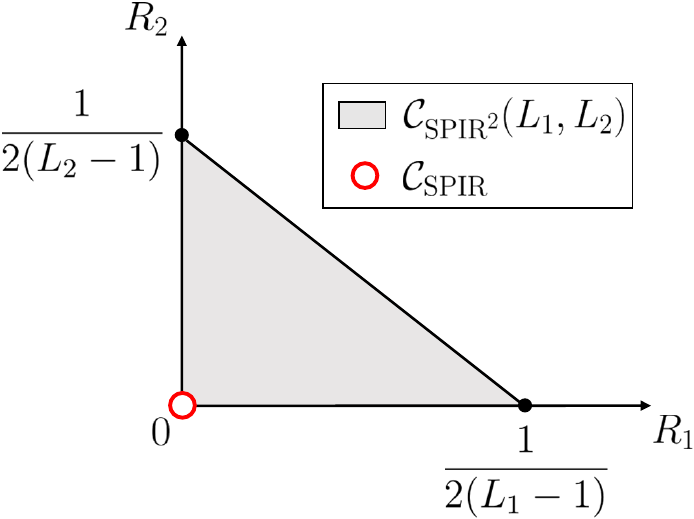}
  \caption{ The x-coordinate (resp. y-coordinate) of $\mathcal{C}_{\textup{SPIR}}$ is defined as the SPIR capacity between Server $1$ (resp. Server~$2$) and client in the absence of Server $2$ (resp. Server $1$), i.e.,  $\mathcal{C}_{\textup{SPIR}}=(0,0)$. $\mathcal{C}_{\textup{SPIR}^2}(L_1,L_2)$ is  the dual-source SPIR  capacity region characterized in Theorem \ref{th1}.}
  \label{fig1}
\end{figure}
Our main results  provide a full characterization of  the dual-source SPIR  capacity region.
\begin{thm} \label{th2}
The dual-source SPIR capacity region is \begin{multline*}
    \mathcal{C}_{\textup{SPIR}^2}(L_1,L_2) \\= \left\{ (R_1,R_2) : (L_1-1) R_1 + (L_2-1) R_2 \leq  \frac{1}{2} \right\}.\end{multline*}
Moreover, any rate pair in $\mathcal{C}_{\textup{SPIR}^2}(L_1,L_2)$ is achievable without time-sharing.
\end{thm}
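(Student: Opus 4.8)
\emph{Proof strategy.} We prove the two inclusions separately: achievability (every rate pair with $(L_1-1)R_1+(L_2-1)R_2\leq 1/2$ is attained by a single protocol, with no time-sharing) and the converse (every achievable pair satisfies this inequality).

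For achievability, the starting observation is that over the binary adder MAC each server's channel input is seen by the client as if it had passed through an erasure channel with erasure probability $1/2$: if both servers transmit i.i.d.\ uniform bits, then $Y_t\in\{0,2\}$ with probability $1/2$, in which case the client learns both $(X_1)_t$ and $(X_2)_t$, while if $Y_t=1$ the client learns neither input, only their parity. I would therefore have both servers transmit i.i.d.\ uniform bits on all $n$ channel uses, so that the randomness each server's sub-protocol needs is supplied for free by the other server's transmission --- this is what avoids time-sharing, shared randomness, a noisy channel, and replication simultaneously. The client partitions $\llbracket 1,n\rrbracket$ into blocks of sizes $q_1$ and $q_2=n-q_1$, and on the block of size $q_i$ runs a classical oblivious-transfer-from-erasure-channel scheme. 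In the base case $L_i=2$: the client reveals two disjoint subsets of the block, pointing the server at the one lying in its ``unerased'' set (so the client can reconstruct, via a universal hash / privacy amplification of the raw unerased bits, a one-time pad for $F_{i,Z_i}$), the complementary subset consisting of ``erased'' positions (so $F_{i,\bar Z_i}$ is masked by bits the client cannot see). For general $L_i$ one uses $L_i$ disjoint size-$s$ subsets instead of two: the requested one inside the unerased set and the remaining $L_i-1$ arranged to carry about $q_i/(2(L_i-1))$ erased positions each, which is exactly the min-entropy needed to mask one file. Recovery \eqref{eqre} is immediate; request privacy \eqref{eqpra}--\eqref{eqprb} holds because server $i$, knowing only $X_i^n$, cannot tell which subset lies in the (to it, uniformly random) unerased set; \eqref{eqpr2} holds because the non-requested subsets carry enough min-entropy from the client's view, the parities $Y_t=1$ revealing only relations involving the \emph{other} server's never-observed noise; and \eqref{eqps12}--\eqref{eqps21} hold because server $\bar i$'s masks are functions of $X_{\bar i}^n$, which is uniform and independent of everything server $i$ sees. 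Bookkeeping gives $(L_i-1)R_i\leq q_i/(2n)$, so $(L_1-1)R_1+(L_2-1)R_2\leq 1/2$, and letting $q_1$ range over $\llbracket 0,n\rrbracket$ traces out the whole region.

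For the converse, fix an achievable pair. I would combine Fano's inequality for \eqref{eqre} with the security constraints to show first that $(L_i-1)nR_i$ is, up to $o(n)$, at most the residual uncertainty the client must retain about server $i$'s channel input $X_i^n$: since $X_i^n$ is the only private conduit carrying server $i$'s data to the client, hiding $L_i-1$ files of length $nR_i$ forces at least $(L_i-1)nR_i$ bits of ambiguity about $X_i^n$ in the client's view, the public transcript $\mathbf A$ being handled via \eqref{eqps12}--\eqref{eqps21} (which make $\mathbf A$ essentially independent of the other server's secret data). Adding the two bounds and using $X_2^n=Y^n-X_1^n$, the sum $(L_1-1)nR_1+(L_2-1)nR_2$ is controlled by $H(X_1^n,X_2^n\mid Y^n,\cdot)=H(X_1^n\mid Y^n,\cdot)$. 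A single-letterization then reduces this to $\sum_{t=1}^n H\big((X_1)_t\mid (X_1)_t+(X_2)_t,\cdot\big)$, and the proof is closed by the lemma that, over all product distributions $p_{X_1}p_{X_2}$ on $\{0,1\}^2$, $H(X_1\mid X_1+X_2)\leq 1/2$ with equality at the uniform input --- the announced optimization over the distribution maximizing the conditional entropy of two Bernoulli random variables given their sum. Letting $n\to\infty$ yields $(L_1-1)R_1+(L_2-1)R_2\leq 1/2$.

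The step I expect to be the main obstacle is this single-letterization. Because the protocol is interactive, $(X_1)_t$ and $(X_2)_t$ are functions of the entire past public transcript, which is itself correlated with $Y^{t-1}$ and hence couples the two inputs, so $(X_1)_t$ and $(X_2)_t$ are \emph{not} conditionally independent for free; the $\leq 1/2$ bound genuinely needs the product structure, since for a correlated pair concentrated on $\{(0,1),(1,0)\}$ one has $H(X_1\mid X_1+X_2)=1$. Making the reduction rigorous therefore requires conditioning on precisely the right side information --- the public transcript together with appropriate auxiliary variables --- and invoking \eqref{eqps12}--\eqref{eqps21} to decouple $(F_{1,\mathcal L_1},U_1)$ from $(F_{2,\mathcal L_2},U_2)$, so that the per-letter conditional law of $((X_1)_t,(X_2)_t)$ is (nearly) a product and the optimization lemma applies with only the $o(1)$ slack inherited from the security constraints. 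The optimization lemma itself is elementary but must be established in exactly the conditional form needed.
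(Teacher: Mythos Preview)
Your overall plan --- erasure-via-MAC achievability followed by a Fano-plus-privacy converse culminating in the optimization $\max_{p_{X_1}p_{X_2}}H(X_1X_2\mid Y)=\tfrac12$ --- matches the paper's. Two substantive differences deserve comment.

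\textbf{Achievability.} The paper does not pre-partition the channel uses into server-specific blocks, nor does it run a direct $1$-out-of-$L_i$ scheme. It first treats $L_1=L_2=2$: after observing $Y^n$ the client splits the ``good'' set $\mathcal G=\{t:Y_t\in\{0,2\}\}$ and the ``bad'' set $\mathcal B=\{t:Y_t=1\}$ into pieces of sizes $\alpha M,\bar\alpha M$ and hands one good/bad pair to each server (Algorithm~\ref{alg:ot}). General $(L_1,L_2)$ is then handled by a Brassard--Cr\'epeau--Robert style reduction, invoking the binary protocol $(L_1-1)(L_2-1)$ times on file fragments (Algorithms~\ref{algrtoc1}--\ref{algrtoc}). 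Your direct construction should also achieve the region, but (i) your fixed split $q_1+q_2=n$ is exactly what one would call time-sharing between the two retrieval sub-protocols, whereas the paper's ``without time-sharing'' claim rests on both retrievals drawing from the \emph{same} $n$ uses with the split made adaptively on $\mathcal G,\mathcal B$; and (ii) no universal hashing or privacy amplification is needed here --- the unerased bits already give exact one-time pads.

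\textbf{Converse.} Your outline (bound $(L_i-1)nR_i$ by the client's residual uncertainty about $X_i^n$, sum, single-letterize, then apply the optimization lemma) is precisely the paper's Lemmas~\ref{lema}--\ref{lemb} and Lemma~\ref{lemopt}. Where you diverge is in handling the product restriction at the single-letter step. The paper does \emph{not} use the server-mutual-privacy constraints \eqref{eqps12}--\eqref{eqps21} anywhere in its converse; it simply passes from $H((X_1)_T(X_2)_T\mid Y_T)$ to $\max_{p_{X_1}p_{X_2}}H(X_1X_2\mid Y)$ without further comment. Your instinct that this step needs care is sound --- interaction can correlate $(X_1)_t$ and $(X_2)_t$ --- but your proposed fix via \eqref{eqps12}--\eqref{eqps21} is not obviously sufficient either: those constraints concern the \emph{files} $F_{i,\mathcal L_i}$, whereas the correlation you worry about can equally be introduced through the client's local randomness $U_0$ (which feeds both $M_{0,1,t}$ and $M_{0,2,t}$), and neither \eqref{eqps12} nor \eqref{eqps21} says anything about $U_0,U_1,U_2$. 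If you pursue this, the natural move is to keep $(\mathbf A,Z_1,Z_2,U_0)$ in the conditioning through Lemma~\ref{lemb} (at no cost), so that each $(X_i)_t$ becomes a deterministic function of $(F_{i,\mathcal L_i},U_i)$ alone; the remaining work is to argue these stay (nearly) conditionally independent once $Y^n$ is also conditioned on.
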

\begin{proof}
The converse and achievability are proved in Sections~\ref{secproof2} and \ref{secproof3}, respectively.
\end{proof}

\begin{cor}
If Servers $1$ and $2$ have the same number of files, i.e., $L_1=L=L_2$, then $$\mathcal{C}_{\textup{SPIR}^2}(L) = \left\{ (R_1,R_2) :  R_1 +  R_2 \leq  \frac{1}{2(L-1)} \right\}.$$
Moreover, if $L_1=L=L_2$ and the files on both servers have the same size, i.e.,  $nR_1 = nR=nR_2$, then $$\mathcal{C}_{\textup{SPIR}^2}(L) = \left\{ (R,R) :  R \leq  \frac{1}{4(L-1)} \right\}.$$
\end{cor}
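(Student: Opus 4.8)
The plan is to obtain the corollary as a direct specialization of Theorem~\ref{th2}, so no new machinery is needed; the work is purely algebraic.

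First, I would set $L_1 = L_2 = L$ in the capacity-region characterization $\mathcal{C}_{\textup{SPIR}^2}(L_1,L_2) = \{(R_1,R_2): (L_1-1)R_1 + (L_2-1)R_2 \leq 1/2\}$ provided by Theorem~\ref{th2}. The defining inequality then reads $(L-1)R_1 + (L-1)R_2 \leq 1/2$, i.e., $(L-1)(R_1+R_2) \leq 1/2$. In the nondegenerate regime $L \geq 2$ one has $L-1 > 0$, so dividing through yields $R_1 + R_2 \leq \tfrac{1}{2(L-1)}$, which is exactly the claimed set $\mathcal{C}_{\textup{SPIR}^2}(L)$; the two descriptions define the same subset of $\mathbb{R}_{\geq 0}^2$, and the ``without time-sharing'' clause is inherited verbatim from Theorem~\ref{th2}.

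Second, for the equal-file-length case I would additionally impose $nR_1 = nR = nR_2$, i.e., $R_1 = R_2 = R$, which amounts to restricting the region from the previous step to the diagonal $\{(R,R): R \geq 0\}$. Substituting $R_1 = R_2 = R$ into $R_1 + R_2 \leq \tfrac{1}{2(L-1)}$ gives $2R \leq \tfrac{1}{2(L-1)}$, hence $R \leq \tfrac{1}{4(L-1)}$; conversely every $R$ meeting this bound produces a diagonal pair inside $\mathcal{C}_{\textup{SPIR}^2}(L)$, so the symmetric-rate subregion is $\{(R,R): R \leq \tfrac{1}{4(L-1)}\}$, again achievable without time-sharing.

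There is essentially no obstacle here: the corollary is an elementary rescaling of Theorem~\ref{th2}. The only point worth flagging is the implicit hypothesis $L \geq 2$, which is what legitimizes the division by $L-1$ (for $L=1$ there are no unselected files to protect and the bound is vacuous), and, if one wishes to be fully careful, observing that intersecting with the diagonal is precisely what the constraint $nR_1 = nR_2$ encodes.
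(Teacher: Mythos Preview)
Your proposal is correct and matches the paper's approach: the paper states the corollary immediately after Theorem~\ref{th2} with no proof, treating it as an immediate specialization, which is exactly the algebraic substitution you describe.
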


The results contrast with the case where the client performs independent SPIR with each of the two servers, i.e., with one server at a time and without any intervention of the other server, since in this case information-theoretic SPIR is impossible over a noiseless communication channel without any additional resources. We illustrate this point in Figure \ref{fig1}.

\section{Achievability part of Theorem \ref{th2}} \label{secproof3}
We first treat the special case $L_1=L_2=2$ in Section \ref{secL1L22}. Then, we  show in Section~\ref{secL1L2} that the case $L_1=L_2$ can be reduced to the special case $L_1=L_2=2$. Finally, we treat the general case $L_1\neq L_2$ in Section \ref{secgeneral} and show that it can also be reduced to the case $L_1=L_2=2$.

\subsection{Special case $L_1 = L_2=2$} \label{secL1L22}
Suppose that $L_1 = L_2=2$. Our coding scheme, based on \cite{ahlswede2013oblivious}, is described in Algorithm \ref{alg:ot}. The idea of Algorithm \ref{alg:ot} is to exploit the additive channel structure $Y^n = X_1^n + X_2^n$. Positions where $Y_t \in \{ 0,2\}$ (``good'' indices) let the client recover the channel inputs, while positions with $Y_t = 1$ (``bad'' indices) reveal nothing about the inputs. Using this partition, the client constructs two sets of indices $(S_1^{(i)}, S_2^{(i)})$ for each Server $i\in\{1,2\}$, one corresponding to recoverable positions, the other to unrecoverable ones, without revealing which is which. Each server then applies a one-time pad to its inputs over both sets as follows.  On the set the client can recover, the masked input allows reconstruction of the desired file. On the set the client cannot recover, the masked input hides information about the non-requested file. This ensures the client obtains exactly the requested file while preserving both client and server privacy. Algorithm \ref{alg:ot} is formally  analyzed as follows.  
\begin{algorithm}
  \caption{Dual-source SPIR when $L_1=L_2=2$}
  \label{alg:ot}
  \begin{algorithmic}[1]
  \vskip 6pt
  \Require $t<1/2$ and  $\alpha\in [0,1]$. \vskip 6pt
  \State  The channel \eqref{eqchannel} is used as follows: \vskip 3pt
  \begin{algsubstates}
        \State Consider $(X_1^n,X_2^n)$ distributed according to the uniform distribution over $\{ 0,1 \}^{2n}$ 
        \vskip 3pt
        \State Servers $1$ and $2$ send $X_1^n$ and $X_2^n$, respectively, over the channel \eqref{eqchannel} \vskip 3pt
        \State The client observes $Y^n \triangleq X_1^n + X_2^n $
      \end{algsubstates}  \vskip 6pt

\State Upon observation of $Y^n$, the client \vskip 3pt

\begin{algsubstates}
        \State Defines \vspace{-.7em}
 	\begin{align*}
    \mathcal{G} &\triangleq \{ i \in \llbracket 1 , n \rrbracket : Y_i \in \{ 0,2\} \}, \\	\mathcal{B} &\triangleq \{ i \in \llbracket 1 , n \rrbracket : Y_i =1 \} ;\end{align*} 

        \State Defines $M \triangleq \min (|\mathcal{G}| , |\mathcal{B}|)$; \vskip 3pt
        
        \State Constructs $\mathcal{G}_1, \mathcal{G}_2$ such that $\mathcal{G}_1 \cup \mathcal{G}_2 \subset \mathcal{G} $, $\mathcal{G}_1 \cap \mathcal{G}_2 = \emptyset $, and $(|\mathcal{G}_1|,|\mathcal{G}_2|) = (\alpha M, \bar{\alpha} M)$; \vskip 3pt
        
        \State Constructs $\mathcal{B}_1, \mathcal{B}_2$ such that $\mathcal{B}_1 \cup \mathcal{B}_2 \subset \mathcal{B} $, $\mathcal{B}_1 \cap \mathcal{B}_2 = \emptyset $, and $(|\mathcal{B}_1|,|\mathcal{B}_2|) = (\alpha M, \bar{\alpha} M)$;\footnotemark \vskip 3pt
        
        \State Defines for $i \in \{ 1,2 \}$ \begin{align*}
	S_1^{(i)} \triangleq \begin{cases} 
	\mathcal{G}_i & \text{if }Z_i = 1 \\
	\mathcal{B}_i & \text{if }Z_i = 2 \\
	\end{cases},\quad
		S_2^{(i)} \triangleq \begin{cases} 
	\mathcal{B}_i & \text{if }Z_i = 1 \\
	\mathcal{G}_i & \text{if }Z_i = 2 \\
	\end{cases}.
\end{align*}
Note that when $Z_i = j$, $i\in \{ 1,2\}$, $j \in \{ 1,2\}$, the client can determine $X_i^n[S_j^{(i)}]$.
      \end{algsubstates} \vskip 6pt

\State  If 
\begin{align} \label{eqsize}
\left|  \frac{|\mathcal{G}|}{n} - \frac{1}{2}   \right| \leq n^{-t},
\end{align}
then the client sends to Servers $1$ and $2$ the sets $S_1^{(1)},S_2^{(1)},S_1^{(2)},S_2^{(2)}$, otherwise  the client aborts the protocol.\vskip 6pt
\State The public communication of the servers is as follows: \vskip 3pt
\begin{algsubstates}
        \State Server $1$ sends to the client $(M_{11}, M_{12})$ where
\begin{align*}
(M_{11}, M_{12})& \triangleq (X_1^n[S_1^{(1)}] \oplus F_{1,1}, X_1^n[S_2^{(1)}] \oplus F_{1,2}). 
\end{align*}
\vskip -3pt
\State Server $2$ sends to the client $(M_{21}, M_{22})$ where
\begin{align*}
(M_{21}, M_{22})& \triangleq (X_2^n[S_1^{(2)}] \oplus F_{2,1}, X_2^n[S_2^{(2)}] \oplus F_{2,2}).
\end{align*}
       \end{algsubstates}
\State The client obtains its file selection as follows:\vskip 3pt
\begin{algsubstates}
        \State  If $Z_1 =i\in \{ 1,2\}$, then the client determines $X_1^n[S_i^{(1)}]$ and computes $M_{1i} \oplus  X_1^n[S_i^{(1)}] = F_{1,i}$.\vskip 3pt\State If $Z_2 =i\in \{ 1,2\}$, then the client determines $X_2^n[S_i^{(2)}]$ and computes $M_{2i} \oplus  X_2^n[S_i^{(2)}] = F_{2,i}$.
       \end{algsubstates}
  \end{algorithmic}
\end{algorithm}
\subsubsection{Reliability}
By construction, \eqref{eqre} is true unless $\left| \frac{|\mathcal{G}|}{n} - \frac{1}{2}  \right| > n^{-t}$, which happens with negligible probability $O(n^{-1+2t})$ by Chebyshev's inequality. 
\subsubsection{Privacy of the receiver selection}
The entire public communication between the receiver and the senders   is $\mathbf{A} \triangleq (M_{11}, M_{12},M_{21}, M_{22}, S_1^{(1)},S_2^{(1)},S_1^{(2)},S_2^{(2)})$. We prove that Server 1 does not learn the  file choices of the client as follows 
\begin{align}
	& I(F_{1,1}F_{1,2}X_1^n \mathbf{A};Z_1Z_2) \nonumber \\ \nonumber
	& \stackrel{(a)} =  I(F_{1,1}F_{1,2}X_1^n  M_{21} M_{22}  S_1^{(1)}S_2^{(1)}S_1^{(2)}S_2^{(2)};Z_1Z_2) \\ \nonumber
		& \stackrel{(b)} \leq  I(F_{1,1}F_{1,2}X_1^n    S_1^{(1)}S_2^{(1)}S_1^{(2)}S_2^{(2)};Z_1Z_2) \\ \nonumber
		& \stackrel{(c)} =  I(X_1^n    S_1^{(1)}S_2^{(1)}S_1^{(2)}S_2^{(2)};Z_1Z_2) \\ \nonumber
		& \stackrel{(d)} = I(X_1^n    S_1^{(1)}S_2^{(1)};Z_1) + I(X_1^n    S_1^{(1)}S_2^{(1)};Z_2|Z_1) \\ \nonumber
		& \phantom{--} + I(S_1^{(2)}S_2^{(2)};Z_2|X_1^n    S_1^{(1)}S_2^{(1)}) \\ \nonumber
		& \phantom{--} + I(S_1^{(2)}S_2^{(2)};Z_1|Z_2 X_1^n    S_1^{(1)}S_2^{(1)}) \nonumber\\ \nonumber
	& \stackrel{(e)} \leq  I(X_1^n[\mathcal{G}_1 \cup \mathcal{B}_1]    S_1^{(1)}S_2^{(1)};Z_1)\\ \nonumber
		& \phantom{--}  + I(X_1^n[\mathcal{G}_2 \cup \mathcal{B}_2]    S_1^{(2)}S_2^{(2)};Z_2) \\ 
	& = 0, \label{eqa1}
	\end{align}\footnotetext{As an example, suppose $n=12$, $\alpha = \scriptstyle\frac{1}{3}$, and $Y^n = (1, 0, 2, 0, 1, 2, 0, 1, 1, 2, 1, 1 )$. Then, $M=6$, $\mathcal{G} = \{ 2 , 3 ,4, 6 ,7,10\}$,  $\mathcal{B} = \{1,5,8,9,11,12 \}$, and, for instance, $\mathcal{G}_1 = \{ 2 , 3 \}$, $\mathcal{G}_2 = \{4, 6 ,7,10\}$, $\mathcal{B}_1 = \{1,5 \}$, $\mathcal{B}_2 = \{8,9,11,12 \}$.} 
where
\begin{enumerate}[(a)]
    \item  holds because $(M_{11},M_{12})$ is a function of $(X_1^n, F_{1,1},F_{1,2},S_1^{(1)},S_2^{(1)} )$; \item holds by~\eqref{eqotp1} in Lemma~\ref{lemotp} (Appendix \ref{applemotp}) with $\epsilon = \delta =0$ and the substitutions $A \leftarrow (Z_1, Z_2)$, $B \leftarrow (F_{1,1}F_{1,2},X_1^n,S_1^{(1)},S_2^{(1)}, S_1^{(2)},S_2^{(2)}  )$, $C \leftarrow (X_2^n[S_1^{(2)}] , X_2^n[S_2^{(2)}] )$, $D \leftarrow (F_{2,1}F_{2,2})$; 
    \item holds because $(F_{1,1}F_{1,2})$ is independent of $(X_1^n,S_1^{(1)},S_2^{(1)},S_1^{(2)},S_2^{(2)},Z_1,Z_2)$; 
    \item holds by the chain rule; 
    \item holds because $I(X_1^n    S_1^{(1)}S_2^{(1)};Z_2|Z_1) =0$ by independence between $Z_2$ and $(X_1^n ,   S_1^{(1)},S_2^{(1)},Z_1)$, \begin{align*} 
    &I(S_1^{(2)}S_2^{(2)};Z_1|Z_2 X_1^n    S_1^{(1)}S_2^{(1)}) \\
    &\leq I(S_1^{(2)}S_2^{(2)} Z_2 X_1^n[\mathcal{G}_2 \cup \mathcal{B}_2];Z_1 X_1^n[\mathcal{G}_1 \cup \mathcal{B}_1]    S_1^{(1)}S_2^{(1)}) \\
    &=0,\end{align*} and we also have  \begin{align*} 
    & I(S_1^{(2)}S_2^{(2)};Z_2|X_1^n    S_1^{(1)}S_2^{(1)}) \\ &\leq   I(S_1^{(2)}S_2^{(2)} X_1^n[\mathcal{G}_2 \cup \mathcal{B}_2];Z_2|X_1^n[\mathcal{G}_1 \cup \mathcal{B}_1] S_1^{(1)}S_2^{(1)}) \\
    & =I(S_1^{(2)}S_2^{(2)} X_1^n[\mathcal{G}_2 \cup \mathcal{B}_2];Z_2).\end{align*}
\end{enumerate}

Next, by exchanging the role of the servers in the proof of \eqref{eqa1}, we obtain that Server 2 does not learn the file choices of the client, i.e.,
\begin{align*}
	 I(F_{2,1}F_{2,2}X_2^n \mathbf{A};Z_1Z_2) = 0. 
	\end{align*}

\subsubsection{Privacy of the unselected strings for the senders}
We now prove that the client does not learn any information about the unselected files in \eqref{eqh1}
\begin{figure*}
\begin{align*}
	 I\left(  Z_1 Z_2 Y^n \mathbf{A}; F_{1,\bar{Z}_1} F_{2,\bar{Z}_2} \right)
	& \stackrel{(a)} = I\left(  Z_1 Z_2 Y^n  M_{11} M_{12} M_{21} M_{22}  ; F_{1,\bar{Z}_1} F_{2,\bar{Z}_2} \right) \\
	& \stackrel{(b)} = \frac{1}{4}\sum_{z_1,z_2} I\left(  Y^n  M_{11} M_{12} M_{21} M_{22}  ; F_{1,\bar{z}_1} F_{2,\bar{z}_2} |  Z_1 =z_1, Z_2 = z_2 \right) \\
	& \stackrel{(c)} = \frac{1}{4}\sum_{z_1,z_2} I\left(  Y^n F_{1,z_1} F_{2,z_2} ,X_1^n[S^{(1)}_{\bar{z}_1}] \oplus F_{1,\bar{z}_1},  X_2^n[S^{(2)}_{\bar{z}_2}] \oplus F_{2,\bar{z}_2} ; F_{1,\bar{z}_1} F_{2,\bar{z}_2} |  Z_1 =z_1, Z_2 = z_2  \right) \\
	& \stackrel{(d)} = \frac{1}{4}\sum_{z_1,z_2} I\left(  Y^n ,X_1^n[S^{(1)}_{\bar{z}_1}] \oplus F_{1,\bar{z}_1},  X_2^n[S^{(2)}_{\bar{z}_2}] \oplus F_{2,\bar{z}_2} ; F_{1,\bar{z}_1} F_{2,\bar{z}_2} |  Z_1 =z_1, Z_2 = z_2  \right) \\
	& \stackrel{(e)} = 0 \numberthis \label{eqh1}
\end{align*}
\hrulefill
\end{figure*}
where 
\begin{enumerate}[(a)]
    \item holds because $(S_1^{(1)},S_2^{(1)},S_1^{(2)},S_2^{(2)})$ is a function of $(Y^n,Z_1,Z_2)$;
    \item holds because $(Z_1,Z_2)$ is independent of $(F_{1,\bar{Z}_1} ,F_{2,\bar{Z}_2})$;
    \item holds because there is a one-to-one mapping between $(Y^n,  M_{11}, M_{12})$ and $(Y^n,F_{1,z_1}  ,X_1^n[S^{(1)}_{\bar{z}_1}] \oplus F_{1,\bar{z}_1})$, and between $(Y^n, M_{21}, M_{22} )$ and $(Y^n,F_{2,z_2} , X_2^n[S^{(2)}_{\bar{z}_2}] \oplus F_{2,\bar{z}_2})$;
    \item holds because $( F_{1,z_1}, F_{2,z_2} )$ is independent of $(Y^n ,X_1^n[S^{(1)}_{\bar{z}_1}] \oplus F_{1,\bar{z}_1},  X_2^n[S^{(2)}_{\bar{z}_2}] \oplus F_{2,\bar{z}_2} , F_{1,\bar{z}_1} ,F_{2,\bar{z}_2})$;
    \item holds by \eqref{eqotp2} in Lemma \ref{lemotp} (Appendix \ref{applemotp}) with $\epsilon = \delta = 0$ and the substitutions $A \leftarrow Y^n$, $C \leftarrow (F_{1,\bar{z}_1}, F_{2,\bar{z}_2})$, $D \leftarrow (X_1^n[S^{(1)}_{\bar{z}_1}] ,X_2^n[S^{(2)}_{\bar{z}_2}] ) $. 
\end{enumerate}

 \subsubsection{Servers’ mutual privacy} 
 We have 
\begin{align}
&I\left(  F_{1,1} F_{1,2} X_1^n  \mathbf{A}; F_{2,1} F_{2,2} \right) \nonumber \\
& \stackrel{(a)} =  I(F_{1,1} F_{1,2} X_1^n  M_{21} M_{22}  S_1^{(1)}S_2^{(1)}S_1^{(2)}S_2^{(2)};F_{2,1} F_{2,2}) \nonumber \\ \nonumber
& \stackrel{(b)} = 0,
\end{align}
where 
\begin{enumerate}[(a)]
    \item holds because $(M_{11},M_{12})$ is a function of $(X_1^n, F_{1,1} ,F_{1,2},S_1^{(1)},S_2^{(1)} )$;
    \item holds by \eqref{eqotp2} in  Lemma~\ref{lemotp} with the substitutions  $\epsilon \leftarrow 0 $, $\delta \leftarrow 0 $, $A \leftarrow (F_{1,1}, F_{1,2},X_1^n,S_1^{(1)},S_2^{(1)},S_1^{(2)},S_2^{(2)})$, $C \leftarrow ( F_{2,1}, F_{2,2} )$, $D \leftarrow (X_2^n[S_1^{(2)}],X_2^n[S_2^{(2)}])$.
\end{enumerate}
By exchanging the roles of the servers we also have \begin{align*}
I\left(  F_{2,1} F_{2,2} X_2^n \mathbf{A}; F_{1,1} F_{1,2} \right) 
 = 0.
\end{align*}

\subsubsection{Achieved rate pair} \label{secrate2}
The rate pair $(R_1 = \frac{\alpha}{2}, R_2 = \frac{\bar{\alpha}}{2})$ is achieved since we have 
\begin{align*}
	R_1 & =  \frac{|\mathcal{G}_1|}{n}=\alpha \frac{M}{n} \xrightarrow{n \to \infty} \frac{\alpha}{2},\\
		R_2 & = \frac{|\mathcal{G}_2|}{n}= \bar{\alpha} \frac{M}{n}\xrightarrow{n \to \infty} \frac{\bar{\alpha}}{2},
\end{align*}
where the limits hold because we have $M = \frac{1}{2} \left( |\mathcal{G}| + |\mathcal{B}|  - \big||\mathcal{G}| - |\mathcal{B}| \big| \right) $, and $\frac{|\mathcal{G}|}{n}\xrightarrow{n \to \infty} \frac{1}{2}$  and $\frac{|\mathcal{B}|}{n} = 1- \frac{|\mathcal{G}|}{n} \xrightarrow{n \to \infty} \frac{1}{2}$ by \eqref{eqsize}.

\subsection{Case $L_1=L_2$} \label{secL1L2}
In this section, we present an achievability scheme for the case $L_1 = L_2 = L$ separately, as it is significantly simpler than the achievability scheme for the general case $L_1 \neq L_2$, which is discussed in Section \ref{secgeneral}. The idea is to construct a coding scheme for this case by utilizing $L-1$ times Algorithm~\ref{algrtoc1} developed for the case $L_1=L_2=2$. This reduction idea is inspired by~\cite{brassard1996oblivious} in the context of oblivious transfer. Before we describe our coding scheme in Algorithm~\ref{algrtoc1}, we illustrate the coding scheme idea in the following example. 
\begin{ex} \label{ex1}
    Suppose that $L_1 =L_2 = 3$, and $S$ and $T$ are uniformly distributed over $\llbracket 1,2^{nR_1} \rrbracket $ and $\llbracket 1,2^{nR_2} \rrbracket $, respectively. Servers 1 and 2 engage in a dual-source SPIR protocol from Section~\ref{secL1L22}, interacting with the client twice.  In the first interaction, the client selects between \( S \) and \( F_{1,1} \) from Server 1, and between \( T \) and \( F_{2,1} \) from Server 2. During the second interaction, the client chooses between \( F_{1,2} \oplus S \) and \( F_{1,3} \oplus S \) from Server~1, and between \( F_{2,2} \oplus T \) and \( F_{2,3} \oplus T \) from Server 2.  Specifically, in the first round, if the client desires \( F_{1,1} \), they request \(F_{1,1} \) from Server~1, otherwise, they request \( S \). In the second round, the client requests \( F_{1,2} \oplus S \) or \( F_{1,3} \oplus S \) from Server~1, depending on whether they wish to retrieve \( F_{1,2} \) or \( F_{1,3} \), respectively. The same logic applies to Server 2. More specifically, the series of requests made by the client to retrieve the desired files is summarized in Table \ref{tab2}.
    \begin{table*}[!h]
\caption{Series of requests for file selection at each server in Example \ref{ex1}.} 
    \centering
  \renewcommand{\arraystretch}{1.3}  
\begin{tabular}{l c c c | c c c} \centering
 &  & Server 1 & & & Server 2 & \\
\hline
Request 1 & $F_{1,1}$ & $S$ & $S$ & $F_{2,1}$ & $T$ & $T$ \\
Request 2 & $F_{1,2}\oplus S$ & $F_{1,2}\oplus S$ & $F_{1,3}\oplus S$ & $F_{2,2}\oplus T$ & $F_{2,2}\oplus T$ & $F_{2,3}\oplus T$ \\
\hline
\textbf{Recover} & $F_{1,1}$ & $F_{1,2}$ & $F_{1,3}$ & $F_{2,1}$ & $F_{2,2}$ & $F_{2,3}$ 
\end{tabular} \label{tab2}
\end{table*}
\end{ex}
Algorithm \ref{algrtoc1} extends the idea of Example \ref{ex1} to $L_1 = L_2 = L$, using $L-1$ sequential interactions between the server and the client. At Server $i \in \{1,2\}$, each file, except the first, is hidden with random masks $(S_{i,t})$, for $t \in \llbracket 1, L-2 \rrbracket$. In each round $t \in \llbracket 1, L-1 \rrbracket$, the client selects one of two  values $(C_{i,t}[1], C_{i,t}[2])$, which are defined as some XOR combinations of masks and files, from Server $i$ via Algorithm~\ref{alg:ot}. Across the $L-1$ rounds, the client carefully chooses which values to recover, accumulates them, and combines them with XOR operations to reconstruct the desired file $F_{i,Z_i}$ from Server $i$.

\begin{algorithm}
  \caption{Dual-source SPIR when $L_1=L_2=L$}
  \label{algrtoc1}
  \begin{algorithmic}[1]
  \vskip 6pt
  \Require $L-2$ sequences $(S_{1,t})_{t \in \llbracket 1,L-2 \rrbracket  }$ uniformly distributed over $\{ 0,1\}^{nR_1}$, $L-2$ sequences $(S_{2,t})_{t \in \llbracket 1,L-2 \rrbracket  }$ uniformly distributed over $\{ 0,1\}^{nR_2}$, the file selection $(Z_1,Z_2) \in \mathcal{L}_1 \times \mathcal{L}_2$ \vskip 6pt
\State  Server $j\in \llbracket 1,2 \rrbracket$ forms $(C_{j,t})_{t \in \llbracket 1,L-1 \rrbracket  }$ as follows:
\begin{align*}
(C_{j,1}[1],C_{j,1}[2])  &\!\triangleq \! (F_{j,1}, S_{j,1}) \\
(C_{j,t}[1],C_{j,t}[2]) &\!\triangleq \! (F_{j,t} \oplus S_{j,t-1},S_{j,t-1} \oplus S_{j,t}) \\
(C_{j,L-1}[1],C_{j,L-1}[2]) &\!\triangleq \! (F_{j,L-1} \!\oplus \! S_{j,L-2},S_{j,L-2} \!\oplus \! F_{j,L})
\end{align*}
where $t  \in \llbracket 2 , L-2 \rrbracket$.

\noindent{}For $t  \in \llbracket 1 , L-1 \rrbracket$, define $C_{j,t} \triangleq (C_{j,t}[1],C_{j,t}[2])$.
  \vskip 6pt
\State The client forms $(Z_{j,t})_{j \in \llbracket 1 ,2 \rrbracket,t \in \llbracket 1, L-1 \rrbracket}$ as follows:
\begin{align*}
Z_{j,t} & \triangleq  1+\mathds{1} \{ t < Z_j \}   , \forall t \in \llbracket 1, L-1 \rrbracket, \forall j \in \llbracket 1 ,2 \rrbracket 
\end{align*}
\For{ $t  \in \llbracket 1 , L-1 \rrbracket$}  
\State The client and the two servers perform the SPIR protocol in Algorithm \ref{alg:ot}  with the two sequences $(C_{j,t}[1],C_{j,t}[2])$ at Server $j \in \llbracket 1,2 \rrbracket$ and the selection $(Z_{1,t},Z_{2,t})$ for the client. 
\EndFor
\State By Lines 2-5, the client can form for $j \in\llbracket 1 ,2 \rrbracket $
\begin{align*}
F_{j,Z_j} =  \begin{cases} 
C_{j,Z_j}[1] \oplus \bigoplus_{t=1}^{Z_j-1} C_{j,t}[2] & \text{ if } Z_j < L\\
 \bigoplus_{t=1}^{Z_j} C_{j,t}[2] & \text{ if } Z_j = L
\end{cases}
\end{align*}
  \end{algorithmic}
\end{algorithm}

From Section \ref{secrate2}, the lengths of the files in Algorithm \ref{alg:ot} for the case $L_1=L_2=2$ and $n$ channel uses is
$\left(\alpha M, \bar{\alpha} M \right)$,
hence, the rate pair achieved with Algorithm \ref{algrtoc1}, which utilizes $L-1$ times  Algorithm \ref{alg:ot} is 
\begin{align*}
	R_1 & =  \frac{\alpha M}{n(L-1)}\xrightarrow{n \to \infty} \frac{\alpha}{2(L-1)},\\
		R_2 & =  \frac{\bar{\alpha} M}{n(L-1)}\xrightarrow{n \to \infty} \frac{\bar{\alpha}}{2(L-1)}.
\end{align*}
The privacy analysis of Algorithm \ref{algrtoc1} is similar to that of Algorithm \ref{algrtoc}, presented in the next section, and is therefore~omitted.

 \subsection{General case} \label{secgeneral}
 Consider arbitrary values for $L_1$ and $L_2$. The idea is to construct a coding scheme for this case by utilizing  $(L_1-1) \times (L_2-1)$ times  Algorithm \ref{alg:ot}, developed for the case $L_1=L_2=2$, with subsequences of files from Server 1 and Server 2, with size $nR_1/(L_2-1)$ and $nR_2/(L_1-1)$, respectively.

Before we describe our coding scheme in Algorithm \ref{algrtoc}, we illustrate the coding scheme idea in the following example.
\begin{ex} \label{ex2}
 Suppose that $L_1 =3$, $L_2 = 4$.  $S_1,S_2,S_3$ are uniformly distributed over $\llbracket 1,2^{nR_1/3} \rrbracket $ and $T_1,T_2,T_3,T_4$ are uniformly distributed over $\llbracket 1,2^{nR_2/2} \rrbracket $, respectively. Server 1 breaks down their files in three parts, i.e., for $i\in \{1,2,3\}$, $F_{1,i}= (F_{1,i,1},F_{1,i,2},F_{1,i,3})$, and Server 2 breaks down their files in two parts, i.e., for $j\in \{1,2,3,4\}$, $F_{2,j}= (F_{2,j,1},F_{2,j,2})$. Servers 1 and 2 engage in a dual-source SPIR protocol from Section~\ref{secL1L22}, interacting with the client $(L_1-1) \times (L_2-1)= 6$ times such that the client obtains one element of each of the six pairs in each of the two  three-by-two matrices defined in \eqref{eqmat2}.
 \begin{figure*}
 \begin{align}
 M_1 \triangleq \begin{pmatrix}
(F_{1,1,1},S_1) & (F_{1,2,1}\oplus S_1,F_{1,3,1}\oplus S_1) \\
(F_{1,1,2},S_2) & (F_{1,2,2}\oplus S_2,F_{1,3,2}\oplus S_2) \\
(F_{1,1,3},S_3) & (F_{1,2,3}\oplus S_3,F_{1,3,3}\oplus S_3)
\end{pmatrix}\!, \quad
M_2 \triangleq \begin{pmatrix}
(F_{2,1,1},T_1) & (F_{2,1,2},T_3) \\
(F_{2,2,1}\oplus T_1,T_1\oplus T_2) & (F_{2,2,2}\oplus T_3,T_3\oplus T_4) \\
(F_{2,3,1}\oplus T_2,F_{2,4,1}\oplus T_2) & (F_{2,3,2}\oplus T_4,F_{2,4,2}\oplus T_4)
\end{pmatrix} \label{eqmat2}
\end{align}
\hrulefill
\end{figure*}
Observe that the client can recover part $k \in \{1,2,3 \}$ of File $i\in\{1,2,3\}$ from Server 1 with their choices of elements in row $k$ of $M_1$. Similarly, the client can recover part $l \in \{1,2\}$ of File $j\in\{1,2,3,4\}$ from Server 2 with their choices of elements in column $l$ of $M_2$. More specifically, the series of requests made by the client to retrieve the desired files is summarized in Table \ref{tableIII}.\\
 \begin{table*} \centering \caption{Series of requests for file selection at each server in Example \ref{ex2}.} 
 \renewcommand{\arraystretch}{1.3}  
{\small{ \begin{tabular}{l c c c | c c c c}
 &  & Server 1 & & & Server 2 & \\
\hline
Request 1 & $F_{1,1,1}$ & $S_1$ & $S_1$ & $F_{2,1,1}$ & $T_1$ & $T_1$ & $T_1$\\
Request 2 & $F_{1,2,1} \oplus S_1$ & $F_{1,2,1}\oplus S_1$ & $F_{1,3,1}\oplus S_1$ & $F_{2,1,2}$ & $T_3$ & $T_3$ & $T_3$\\
Request 3 & $F_{1,1,2}$ & $S_2$ & $S_2$ & $F_{2,2,1}\oplus T_1$ & $F_{2,2,1}\oplus T_1$ & $T_1\oplus T_2$ & $T_1\oplus T_2$ \\
Request 4 & $F_{1,2,2} \oplus S_2$ &  $F_{1,2,2}\oplus S_2$ & $F_{1,3,2}\oplus S_2$ & $F_{2,2,2}\oplus T_3$ & $F_{2,2,2}\oplus T_3$ & $T_3\oplus T_4$  & $T_3\oplus T_4$\\
Request 5 & $F_{1,1,3}$ & $S_3$ & $S_3$ & $F_{2,3,1}\oplus T_2$ & $F_{2,3,1}\oplus T_2$ & $F_{2,3,1}\oplus T_2$  & $F_{2,4,1}\oplus T_2$\\
Request 6 & $F_{1,2,3} \oplus S_3$ & $F_{1,2,3}\oplus S_3$ & $F_{1,3,3}\oplus S_3$ & $F_{2,3,2}\oplus T_4$ & $F_{2,3,2}\oplus T_4$ & $F_{2,3,2}\oplus T_4$ & $F_{2,4,2}\oplus T_4$\\
\hline
\textbf{Recover} & $F_{1,1}$ & $F_{1,2}$ & $F_{1,3}$ & $F_{2,1}$ & $F_{2,2}$ & $F_{2,3}$ &$F_{2,4}$
\end{tabular}}}\label{tableIII}\end{table*}
\end{ex}
 Algorithm \ref{algrtoc} extends the idea of Example \ref{ex2} to arbitrary $L_1$ and  $L_2 $, using $(L_1-1) \times (L_2-1)$ sequential interactions between the server and the client. Server $i \in \{1,2\}$ breaks down each file in $L_{\bar{i}} -1$ parts, and each part, except the first, is hidden with random masks. Then, similar to Algorithm \ref{algrtoc1}, in each round $k \in \llbracket 1, (L_1-1)\times (L_2-1) \rrbracket$, the client selects one of two masked values from Server $i$ via Algorithm \ref{alg:ot}. Across the $(L_1-1)\times (L_2-1)$ rounds, the client carefully chooses which values to recover, accumulates them, and combines them with XOR operations to reconstruct the $L_{\bar{i}} -1$ parts of the desired file $F_{i,Z_i}$ from Server $i$.

\begin{algorithm*}
  \caption{Dual-source SPIR when $L_1\neq L_2$}
  \label{algrtoc}
  \begin{algorithmic}[1]
  \vskip 6pt
  \Require $(L_1-2)\times(L_2-1)$ sequences $(S_{1,t,i})_{t \in \llbracket 1,L_1-2 \rrbracket, i\in \llbracket 1,L_2-1 \rrbracket  }$ uniformly distributed over $\{ 0,1\}^{nR_1/(L_2-1)}$, $(L_2-2) \times (L_1-1)$ sequences $(S_{2,t,j})_{t \in \llbracket 1,L_2-2 \rrbracket, j\in \llbracket 1,L_1-1 \rrbracket  }$ uniformly distributed over $\{ 0,1\}^{nR_2/(L_1-1)}$, the file selection $(Z_1,Z_2) \in \mathcal{L}_1 \times \mathcal{L}_2$ \vskip 6pt
\State  Server $1$ breaks down File $F_{1,l}$, $l\in\mathcal{L}_1$, in $L_2-1$ parts denoted by $(F_{1,l,i})_{i \in \llbracket 1, L_2-1 \rrbracket}$   \vskip 6pt
\State  Server $2$ breaks down File $F_{2,l}$, $l\in\mathcal{L}_2$, in $L_1-1$ parts denoted by $(F_{2,l,j})_{j \in \llbracket 1, L_1-1 \rrbracket}$   \vskip 6pt
\State Server $1$ forms $(C_{1,t,i})_{t \in \llbracket 1,L_1-1 \rrbracket ,i \in \llbracket 1,L_2-1 \rrbracket }$ as follows: for $i \in \llbracket 1,L_2-1 \rrbracket$,
\begin{align*}
(C_{1,1,i}[1],C_{1,1,i}[2]) & \triangleq (F_{1,1,i}, S_{1,1,i}), \\
(C_{1,t,i}[1],C_{1,t,i}[2]) &\triangleq (F_{1,t,i} \oplus S_{1,t-1,i},S_{1,t-1,i} \oplus S_{1,t,i}), \forall t  \! \in \! \llbracket 2 , L_1 \! -2 \rrbracket, \\
(C_{1,L_1-1,i}[1],C_{1,L_1-1,i}[2])
&\triangleq (F_{1,L_1-1,i} \oplus S_{1,L_1-2,i},S_{1,L_1-2,i} \oplus F_{1,L_1,i}),
\end{align*}
and for $t  \in \llbracket 1 , L_1-1 \rrbracket$, define $C_{1,t,i} \triangleq (C_{1,t,i}[1],C_{1,t,i}[2]).$ Consider also the following notation  $$(\bar{C}_{1,k})_{k \in \llbracket 1 , (L_1-1)(L_2-1)\rrbracket} \triangleq (C_{1,t,i})_{t \in \llbracket 1,L_1-1 \rrbracket ,i \in \llbracket 1,L_2-1 \rrbracket }.$$
  \vskip 6pt
  \State Server $2$ forms $(C_{2,t,i})_{t \in \llbracket 1,L_2-1 \rrbracket ,i \in \llbracket 1,L_1-1 \rrbracket }$ as follows: for $i \in \llbracket 1,L_1-1 \rrbracket$,
\begin{align*}
(C_{2,1,i}[1],C_{2,1,i}[2]) & \triangleq (F_{2,1,i}, S_{2,1,i}), \\
(C_{2,t,i}[1],C_{2,t,i}[2]) 
& \triangleq  (F_{2,t,i} \oplus S_{2,t-1,i},S_{2,t-1,i} \oplus S_{2,t,i}), \forall t \! \in \! \llbracket 2 , L_2-2 \rrbracket, \\
(C_{2,L_2-1,i}[1],C_{2,L_2-1,i}[2]) &\triangleq (F_{2,L_2-1,i} \oplus S_{2,L_2-2,i},S_{2,L_2-2,i} \oplus F_{2,L_2,i}),
\end{align*}
and for $t  \in \llbracket 1 , L_2-1 \rrbracket$, define $C_{2,t,i} \triangleq (C_{2,t,i}[1],C_{2,t,i}[2])$. Consider also the following notation  $$(\bar{C}_{2,k})_{k \in \llbracket 1 , (L_1-1)(L_2-1)\rrbracket} \triangleq (C_{2,t,i})_{t \in \llbracket 1,L_2-1 \rrbracket ,i \in \llbracket 1,L_1-1 \rrbracket }.$$ 
  \vskip 6pt
\State The client forms $(Z_{1,t,i})_{t \in \llbracket 1, L_1-1 \rrbracket, i \in \llbracket 1 ,L_2-1 \rrbracket}$ and $(Z_{2,t,i})_{t \in \llbracket 1, L_2-1 \rrbracket, i \in \llbracket 1 ,L_1-1 \rrbracket}$ as follows:
\begin{align*}
Z_{1,t,i} & \triangleq  1+\mathds{1} \{ t < Z_1 \}  , \forall t \! \in \! \llbracket 1, L_1 \! -1 \rrbracket, \forall i \! \in \! \llbracket 1 ,L_2 \! -1 \rrbracket \\
Z_{2,t,j} & \triangleq  1+\mathds{1} \{ t < Z_2 \}  , \forall t   \! \in \! \llbracket 1, L_2 \! -1 \rrbracket, \forall j \! \in \! \llbracket 1 ,L_1 \! -1 \rrbracket 
\end{align*}
Consider also the following notation  
\begin{align*}
(Z_{1,k})_{k \in \llbracket 1, (L_1-1)(L_2-1)\rrbracket}  \triangleq  (Z_{1,t,i})_{t \in \llbracket 1, L_1-1 \rrbracket, i \in \llbracket 1 ,L_2-1 \rrbracket},\quad 
(Z_{2,k})_{k \in \llbracket 1, (L_1-1)(L_2-1)\rrbracket}  \triangleq  (Z_{2,t,j})_{t \in \llbracket 1, L_2-1 \rrbracket, j \in \llbracket 1 ,L_1-1 \rrbracket}
\end{align*}
\For{ $k  \in \llbracket 1 , (L_1-1)(L_2-1) \rrbracket$}  
\State The client and the two servers perform the SPIR protocol in Algorithm \ref{alg:ot} with the two sequences $(\bar{C}_{j,k}[1],\bar{C}_{j,k}[2])$ at Server $j \in \llbracket 1,2 \rrbracket$ and the selection $(Z_{1,k},Z_{2,k})$ for the client.

The subscript $k$ is used in the notation of the  random variables $(X_{1,k}^{n},X_{2,k}^{n},Y_{k}^{n}, \mathbf{A}_k, Z_{1,k}, Z_{2,k})$ involved in this SPIR protocol.

\EndFor
\State By Lines 3-5, the client can reconstruct the $L_2 -1$ parts of $F_{1,Z_1}$ and the $L_1 -1$ parts of $F_{2,Z_2}$ as follows:
\begin{align*}
F_{1,Z_1,i} =  \begin{cases} 
C_{1,Z_1,i}[1] \oplus \bigoplus_{t=1}^{Z_1-1} C_{1,t,i}[2] & \text{ if } Z_1 < L_1\\
 \bigoplus_{t=1}^{Z_1} C_{1,t,i}[2] & \text{ if } Z_1 = L_1
\end{cases}, \forall i \in \llbracket 1, L_2 -1 \rrbracket,
\end{align*}
and
\begin{align*}
F_{2,Z_2,j} =  \begin{cases} 
C_{2,Z_2,j}[1] \oplus \bigoplus_{t=1}^{Z_2-1} C_{2,t,j}[2] & \text{ if } Z_2 < L_2\\
 \bigoplus_{t=1}^{Z_2} C_{2,t,j}[2] & \text{ if } Z_2 = L_2
\end{cases}, \forall j \in \llbracket 1, L_1-1 \rrbracket
\end{align*}
  \end{algorithmic}
\end{algorithm*}

\subsubsection{Server's privacy with respect to  Client}
Define 
\begin{align*}
K &\triangleq (L_1-1)(L_2-1) \\
 W_t &\triangleq (\bar{C}_{1,t} [Z_{1,t}], \bar{C}_{2,t} [Z_{2,t}]), \forall t\in \llbracket 1, K\rrbracket, \\ 
W_{1:K} & \triangleq (\bar{C}_{1,t} [Z_{1,t}], \bar{C}_{2,t} [Z_{2,t}])_{t \in \llbracket 1 , K \rrbracket},\\  
Y^n_{1:K} & \triangleq (Y^n_t )_{t \in \llbracket 1 , K \rrbracket}, \\
\mathbf{A}_{1:K} & \triangleq (\mathbf{A}_t )_{t \in \llbracket 1 , K \rrbracket}, \\   (Z_{1,1:K},Z_{2,1:K}) & \triangleq (Z_{1,t},Z_{2,t})_{t \in \llbracket 1 , K \rrbracket} ).\end{align*}

We will need the following two lemmas.

\begin{lem} \label{lem1a}
 We have   \begin{align*}
 I(Z_{1} Z_{2} Y^{n}_{1:K} \mathbf{A}_{1:K}; \bar{C}_{1,1:K}   \bar{C}_{2,1:K}   | W_{1:K} ) 
 = 0. \numberthis \label{eq1a2}
\end{align*}
\end{lem}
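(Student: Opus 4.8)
The plan is to peel off the $K=(L_1-1)(L_2-1)$ rounds of Algorithm~\ref{alg:ot} that make up Algorithm~\ref{algrtoc} one at a time, using the chain rule, and to absorb each round's contribution into Lemma~\ref{lemotp}, exactly as in the ``Privacy of the unselected strings for the senders'' step of Section~\ref{secL1L22}: in a single round, the channel output and public communication reveal nothing about the unselected component of each server's pair beyond the selected component, because the unselected component is transmitted only one-time-padded with a block of channel inputs the client never sees.

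First I would reduce to the event that none of the $K$ sub-protocols aborts, which has probability $1-o(1)$ by Chebyshev's inequality (as in the reliability analysis of Section~\ref{secL1L22}); conditioned on this event, reliability of Algorithm~\ref{alg:ot} makes $W_{1:K}=(\bar C_{1,t}[Z_{1,t}],\bar C_{2,t}[Z_{2,t}])_{t\in\llbracket 1,K\rrbracket}$ a deterministic function of $(Z_1,Z_2,Y^n_{1:K},\mathbf A_{1:K})$. Splitting the target mutual information over the abort indicator and using uniform continuity of entropy on the polynomially bounded alphabets, the abort contribution is $o(1)$, so it suffices to bound the quantity conditioned additionally on no abort.

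Next I would use the algebraic structure of Lines~3--4 of Algorithm~\ref{algrtoc}: for each server $j$ and each part index $i$, the telescoping map from $(F_{j,1,i},\dots,F_{j,L_j,i},S_{j,1,i},\dots,S_{j,L_j-2,i})$ to $(\bar C_{j,(t,i)})_{t}$ is a bijection whose preimage is i.i.d.\ uniform. Hence, conditioned on $W_{1:K}$ and $(Z_1,Z_2)$, the complementary components $\bar C_{j,t}[\bar Z_{j,t}]$ are jointly uniform and mutually independent, and $(\bar C_{1,1:K},\bar C_{2,1:K})$ is a bijective function of $W_{1:K}$ together with these complementary components. It therefore remains to show that $(Y^n_{1:K},\mathbf A_{1:K})$ is asymptotically independent of $\bigl(\bar C_{1,t}[\bar Z_{1,t}],\bar C_{2,t}[\bar Z_{2,t}]\bigr)_{t\in\llbracket 1,K\rrbracket}$ given $(W_{1:K},Z_1,Z_2)$, which I would do by the chain rule over $t$, bounding the $t$-th increment by $0$ via Lemma~\ref{lemotp}: in round $t$, Server $j$'s unselected component appears only inside the one-time-padded message, the pad is a block of round-$t$ channel inputs unseen by the client, and, since each sub-protocol uses its own fresh $n$ channel uses, the pad is independent of everything produced outside round $t$; placing the round-$t$ selected components, the selections, and all data of the other $K-1$ rounds into the side information of Lemma~\ref{lemotp} kills the increment.

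The main obstacle I anticipate is the cross-round bookkeeping in this last step. The sequences $(S_{j,\cdot,i})$ are shared across the rounds sharing a common part index $i$, so the pairs $\bar C_{j,t}$ are not mutually independent across $t$; the argument must route this dependence through the conditioning on $(W_{1:K},Z_1,Z_2)$, under which the complementary components decouple by the bijection above, rather than through any unconditional independence, while still exactly matching the hypotheses of Lemma~\ref{lemotp} round by round. Checking the abort estimate uniformly over the $K$ rounds and disposing of the negligible-probability coincidences $\bar C_{j,t}[1]=\bar C_{j,t}[2]$ are comparatively routine.
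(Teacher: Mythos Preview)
Your plan and the paper's proof both peel off the $K$ sub-rounds via the chain rule, but the paper's route is considerably shorter because it treats the single-round analysis of Section~\ref{secL1L22} as a black box rather than re-running the one-time-pad argument. After splitting off $I(Z_1Z_2;\bar C_{1,1:K}\bar C_{2,1:K}\mid W_{1:K})$ and applying the chain rule over $t$ to $I(Y^n_{1:K}\mathbf A_{1:K};\bar C_{1,1:K}\bar C_{2,1:K}\mid Z_1Z_2W_{1:K})$, the paper handles the cross-round dependence in one stroke via the Markov chain
\[
(Y^n_t,\mathbf A_t)-(\bar C_{1,t},\bar C_{2,t},Z_1,Z_2,W_t)-(\bar C_{1,1:K},\bar C_{2,1:K},Y^n_{1:t-1},\mathbf A_{1:t-1},W_{1:K}),
\]
which holds simply because round~$t$ of Algorithm~\ref{alg:ot} depends only on $(\bar C_{1,t},\bar C_{2,t},Z_{1,t},Z_{2,t})$ and fresh channel uses. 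This collapses the $t$-th increment to $I(Z_{1,t}Z_{2,t}Y^n_t\mathbf A_t;\bar C_{1,t}\bar C_{2,t}\mid W_t)$, which is exactly the single-round server-privacy quantity already established in Section~\ref{secL1L22}. Thus the cross-round bookkeeping you flag as the main obstacle never has to be confronted: the correlation among the $\bar C_{j,t}$'s through the shared $S_{j,\cdot,i}$'s sits on the far side of the Markov chain and is discarded without analysis.

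Your approach is workable but takes a longer path: you first establish an additional structural fact (joint uniformity and independence of the complementary components given $(W_{1:K},Z_1,Z_2)$, which is indeed true by the telescoping bijection) and then redo the Lemma~\ref{lemotp} argument inline round by round, carrying all other rounds' data as side information. Two of your preliminary steps can be dropped: the abort reduction is not needed because the single-round privacy of Section~\ref{secL1L22} holds exactly (under abort no padded messages are sent), and you do not need $W_{1:K}$ to be a function of the client's view, since $W_{1:K}$ is defined directly from $(Z_1,Z_2,\bar C_{1,1:K},\bar C_{2,1:K})$ and the proof simply conditions on it as an abstract random variable.
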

\begin{proof}
    See Appendix \ref{App_lem1}.
\end{proof}

\begin{lem} \label{lem2a}
 We have   
 \begin{align*} 
 I(W_{1:K} ; F_{1, \mathcal{L}_1}F_{2, \mathcal{L}_2} | F_{1,Z_1} F_{2,Z_2} )  = 0.  \numberthis \label{eq1b2}
\end{align*}
\end{lem}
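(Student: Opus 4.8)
The goal is to show that $W_{1:K}$—the collection of the client's chosen components from each of the $K=(L_1-1)(L_2-1)$ invocations—reveals nothing about the full file libraries $F_{1,\mathcal{L}_1}F_{2,\mathcal{L}_2}$ beyond what is already known through the two requested files $F_{1,Z_1}F_{2,Z_2}$. My plan is to exploit the one-time-pad structure of the $C$-sequences in Algorithm \ref{algrtoc}, together with the independence of the pad variables $(S_{1,t,i})$, $(S_{2,t,j})$ from the files. The argument factors over the two servers: since Server $1$'s quantities $\bar{C}_{1,1:K}$ depend only on $F_{1,\mathcal{L}_1}$ and $(S_{1,t,i})$, while Server $2$'s depend only on $F_{2,\mathcal{L}_2}$ and $(S_{2,t,j})$, and the two groups are mutually independent, it suffices to prove the single-server statement $I(W^{(1)}_{1:K}; F_{1,\mathcal{L}_1} \mid F_{1,Z_1})=0$ where $W^{(1)}_{1:K}\triangleq(\bar C_{1,t}[Z_{1,t}])_{t}$, and symmetrically for Server $2$; then conditioning/chain-rule combines them.

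For the single-server claim, fix $i\in\llbracket 1,L_2-1\rrbracket$ and consider the $i$-th ``column'' of invocations, i.e. the chain $(C_{1,t,i})_{t\in\llbracket 1,L_1-1\rrbracket}$ built from the fresh pads $(S_{1,t,i})_{t\in\llbracket 1,L_1-2\rrbracket}$. The client's selections are $Z_{1,t,i}=1+\mathds 1\{t<Z_1\}$, so for $t<Z_1$ it picks $C_{1,t,i}[2]$ and for $t\ge Z_1$ it picks $C_{1,t,i}[1]$. Reading off these chosen components, I would verify by direct telescoping that they are a deterministic invertible function of $(S_{1,Z_1-1,i}, F_{1,Z_1,i})$ — or of $F_{1,Z_1,i}$ alone together with a single uniform pad when $Z_1\in\{1,L_1\}$ — and in particular that, conditioned on $Z_1$, the tuple $(C_{1,t,i}[Z_{1,t,i}])_{t}$ is independent of $(F_{1,l,i})_{l\ne Z_1}$ because each unpicked file $F_{1,l,i}$ with $l\ne Z_1$ is masked by a pad $S_{1,\cdot,i}$ that never appears (unmasked or in any other chosen component). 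This is exactly the standard $1$-out-of-$L$ reduction of \cite{brassard1996oblivious}; I would cite that the correctness/security bookkeeping here is the $\epsilon=\delta=0$ specialization already used in the $L_1=L_2=2$ analysis. Summing over the $L_2-1$ columns (which use disjoint, independent pad families) gives $I(W^{(1)}_{1:K};F_{1,\mathcal{L}_1}\mid F_{1,Z_1})=0$.

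Finally I would assemble the two single-server statements: writing $W_{1:K}=(W^{(1)}_{1:K},W^{(2)}_{1:K})$, using that $(W^{(1)}_{1:K},F_{1,\mathcal L_1})$ is independent of $(W^{(2)}_{1:K},F_{2,\mathcal L_2})$ conditionally on $(Z_1,Z_2)$, and that $Z_1,Z_2$ are independent of all files, a short chain-rule computation yields $I(W_{1:K};F_{1,\mathcal L_1}F_{2,\mathcal L_2}\mid F_{1,Z_1}F_{2,Z_2})=0$. I expect the only delicate point to be the explicit telescoping identity showing that the chosen components of each chain collapse to the requested file part plus an independent uniform pad, and keeping careful track that no leftover pad linking $F_{1,Z_1,i}$ to any other $F_{1,l,i}$ survives — the boundary cases $Z_1=1$ and $Z_1=L_1$ need to be handled separately, matching the two cases in the reconstruction formula in Line 9 of Algorithm \ref{algrtoc}. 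Everything else is routine independence bookkeeping.
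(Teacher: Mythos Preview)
Your overall strategy is sound and in fact more structural than the paper's: the paper proves the lemma by pure entropy counting (introduce $Z_{1,1:K}Z_{2,1:K}$, upper-bound $H(W_{1:K}\mid F_{1,Z_1}F_{2,Z_2}Z_{1,1:K}Z_{2,1:K})$ by the bit-length of the $L_j-2$ non-redundant chosen components per chain, lower-bound $H(W_{1:K}\mid F_{1,\mathcal L_1}F_{2,\mathcal L_2}Z_{1,1:K}Z_{2,1:K})$ by the entropy of the pad families, and observe the two quantities coincide), whereas you argue directly via one-time-pad independence. Both routes work; your factorization over the two servers and over the $L_{\bar\jmath}-1$ independent columns is a clean way to localize the argument.

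However, one intermediate claim in your plan is wrong and would not survive the ``direct telescoping'' verification: for a fixed column $i$ and $1<Z_1<L_1$, the chosen tuple $(C_{1,t,i}[Z_{1,t,i}])_{t\in\llbracket 1,L_1-1\rrbracket}$ has $L_1-1$ components and is \emph{not} an invertible function of only $(S_{1,Z_1-1,i},F_{1,Z_1,i})$ when $L_1\geq 4$. Concretely, for $t>Z_1$ the chosen component is $C_{1,t,i}[1]=F_{1,t,i}\oplus S_{1,t-1,i}$, which involves both an unrequested file part and a fresh pad with index $t-1\geq Z_1$ that you have not accounted for. The correct picture is that the tuple is a bijective function of
\[
\bigl((S_{1,t,i})_{t=1}^{Z_1-1},\;F_{1,Z_1,i},\;(F_{1,t,i}\oplus S_{1,t-1,i})_{t=Z_1+1}^{L_1-1}\bigr),
\]
with analogous adjustments at the boundaries $Z_1\in\{1,L_1\}$. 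Your very next sentence---that each unrequested $F_{1,l,i}$ is masked by a pad $S_{1,\cdot,i}$ that does not appear in any other chosen component---is the right reason and remains valid with this corrected description, so the conclusion $I(W^{(1)}_{1:K};F_{1,\mathcal L_1}\mid F_{1,Z_1},Z_1)=0$ still follows. Just replace the incorrect ``two-argument'' claim by the $(L_1-1)$-argument bijection above before invoking the one-time-pad independence.
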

\begin{proof}
    See Appendix \ref{App_lem2}.
\end{proof}
Then, using Lemma \ref{lem1a} and Lemma \ref{lem2a}, we prove server's privacy with respect to the client as follows. We have
\begin{align*}
&I(Z_{1} Z_2 Y^{n}_{1:K} \mathbf{A}_{1:K}; F_{1, \mathcal{L}_1\backslash \{ Z_1\}}F_{2, \mathcal{L}_2\backslash \{ Z_2\}}  ) \\
& = I(Z_{1} Z_2 Y^{n}_{1:K} \mathbf{A}_{1:K}; F_{1, \mathcal{L}_1}F_{2, \mathcal{L}_2} | F_{1,Z_1} F_{2,Z_2} ) \\
& \leq I(Z_{1} Z_2 Y^{n}_{1:K} \mathbf{A}_{1:K} W_{1:K}; F_{1, \mathcal{L}_1}F_{2, \mathcal{L}_2} | F_{1,Z_1} F_{2,Z_2} ) \\
& = I( W_{1:K}; F_{1, \mathcal{L}_1}F_{2, \mathcal{L}_2} | F_{1,Z_1} F_{2,Z_2} ) \\
& \phantom{--}+ I(Z_{1} Z_2 Y^{n}_{1:K} \mathbf{A}_{1:K} ; F_{1, \mathcal{L}_1}F_{2, \mathcal{L}_2} | F_{1,Z_1} F_{2,Z_2}  W_{1:K}) \\
& \stackrel{(a)}=  I(Z_{1} Z_2 Y^{n}_{1:K} \mathbf{A}_{1:K} ; F_{1, \mathcal{L}_1}F_{2, \mathcal{L}_2} |  W_{1:K}) \\
& \leq  I(Z_{1} Z_2 Y^{n}_{1:K} \mathbf{A}_{1:K} ; F_{1, \mathcal{L}_1}F_{2, \mathcal{L}_2} \bar{C}_{1,1:K}   \bar{C}_{2,1:K}|  W_{1:K}) \\
& =  I(Z_{1} Z_2 Y^{n}_{1:K} \mathbf{A}_{1:K} ;  \bar{C}_{1,1:K}   \bar{C}_{2,1:K}|  W_{1:K})\\
& \phantom{--} +  I(Z_{1} Z_2 Y^{n}_{1:K} \mathbf{A}_{1:K} ; F_{1, \mathcal{L}_1}F_{2, \mathcal{L}_2} |\bar{C}_{1,1:K}   \bar{C}_{2,1:K} W_{1:K})\\
& \stackrel{(b)}= I(Z_{1} Z_2 Y^{n}_{1:K} \mathbf{A}_{1:K} ;  \bar{C}_{1,1:K}   \bar{C}_{2,1:K}|  W_{1:K}) \\
& \stackrel{(c)}= 0,
\end{align*}
where 
\begin{enumerate} [(a)]
    \item holds by \eqref{eq1b2} and because $(F_{1,Z_1}, F_{2,Z_2})$ can be recovered from $W_{1:K}$; 
    \item holds because $(Z_{1}, Z_2, Y^{n}_{1:K}, \mathbf{A}_{1:K}) - ( \bar{C}_{1,1:K} ,\bar{C}_{2,1:K},  W_{1:K}) - (F_{1, \mathcal{L}_1},F_{2, \mathcal{L}_2} )$ forms a Markov~chain; 
    \item holds by \eqref{eq1a2}.
\end{enumerate}
\subsubsection{Client privacy with respect to the servers}
Fix $j \in \llbracket 1 , 2 \rrbracket$. We have  
\begin{align*}
& I(F_{j, \mathcal{L}_j} X_{j,1:K}^n \mathbf{A}_{1:K};Z_1Z_2)\\
&  \stackrel{(a)}= I(F'_{j, 1:K} X_{j,1:K}^n \mathbf{A}_{1:K};Z_1Z_2)\\
& = \sum_{t = 1}^{K} I(F'_{j, t} X_{j,t}^n \mathbf{A}_{t};Z_1Z_2|F'_{j, 1:t-1} X_{j,1:t-1}^n \mathbf{A}_{1:t-1})\\
& \stackrel{(b)}= \sum_{t = 1}^{K} I(F'_{j, t} X_{j,t}^n \mathbf{A}_{t};Z_1Z_2 Z_{1,t}Z_{2,t}|F'_{j, 1:t-1} X_{j,1:t-1}^n \mathbf{A}_{1:t-1})\\
& \stackrel{(c)} = \sum_{t = 1}^{K} I(F'_{j, t} X_{j,t}^n \mathbf{A}_{t};  Z_{1,t}Z_{2,t}|F'_{j, 1:t-1} X_{j,1:t-1}^n \mathbf{A}_{1:t-1})\\
& \stackrel{(d)}\leq  \sum_{t = 1}^{K} I(S^{\star}  F'_{j, t} X_{j,t}^n \mathbf{A}_{t};  Z_{1,t}Z_{2,t}|F'_{j, 1:t-1} X_{j,1:t-1}^n \mathbf{A}_{1:t-1})\\
& \stackrel{(e)}= \sum_{t = 1}^{K}  I(  F'_{j, t}  X_{j,t}^n \mathbf{A}_{t};  Z_{1,t}Z_{2,t}|S^{\star} F'_{j, 1:t-1} X_{j,1:t-1}^n \mathbf{A}_{1:t-1}) \\
& \stackrel{(f)} \leq \sum_{t = 1}^{K} I(  F'_{j, t}  X_{j,t}^n \mathbf{A}_{t} ;  Z_{1,t}Z_{2,t}| S^{\star}  ) \\
&\leq \sum_{t = 1}^{K} I(S^{\star} \bar{C}_{j,t}[1]\bar{C}_{j,t}[2]  F'_{j, t}  X_{j,t}^n \mathbf{A}_{t} ;  Z_{1,t}Z_{2,t}   ) \\
& \stackrel{(g)} = \sum_{t = 1}^{K} I(S^{\star}  \bar{C}_{j,t}[1]\bar{C}_{j,t}[2]     X_{j,t}^n \mathbf{A}_{t} ;  Z_{1,t}Z_{2,t}   ) \\
& \stackrel{(h)} = \sum_{t = 1}^{K} I(S^{\star}  ;  Z_{1,t}Z_{2,t}  | \bar{C}_{j,t}[1]\bar{C}_{j,t}[2]    X_{j,t}^n \mathbf{A}_{t} ) \\
&\stackrel{(i)} = 0,
\end{align*}
where 
\begin{enumerate} [(a)]
    \item holds  for $k \in \llbracket 1, L_j-1 \rrbracket$, $l\in \llbracket1 ,L_{\bar{j}}-1 \rrbracket$, with the definition $$F'_{j, k,l} \triangleq \begin{cases} F_{j, t,l} & \text{ if } k \in \llbracket 1, L_j-2 \rrbracket\\
(F_{j,L_j-1,l},F_{j,L_j,l}) & \text{ if } k = L_j-1\end{cases},$$ and $(F'_{j, t} )_{t \in \llbracket 1, K \rrbracket}\triangleq(F'_{j, k,l} )_{k \in \llbracket 1, L_j-1 \rrbracket, l \in \llbracket1 ,L_{\bar{j}}-1 \rrbracket}$;
\item holds because $(Z_{1,t},Z_{2,t})$ is a function of $(Z_1,Z_2)$ for any $t \in \llbracket 1 , K \rrbracket$; 
\item holds because $ (F'_{j, t}, X_{j,t}^n, \mathbf{A}_{t}) - (F'_{j, 1:t-1}, X_{j,1:t-1}^n ,\mathbf{A}_{1:t-1},Z_{1,t},Z_{2,t}) - (Z_1,Z_2)$ forms a Markov chain; 
\item holds with the definition $$S^{\star}\triangleq \begin{cases} 
F'_{j,t}\oplus \bar{C}_{j,t}[1] & \text{if } |F'_{j,t}|=1\\
F'_{j,t}\oplus \bar{C}_{j,t} & \text{if } |F'_{j,t}|=2\end{cases};$$
\item  holds by the chain rule and because for any $t \in \llbracket 1 , K \rrbracket$
\begin{align*}
&I(S^{\star} ;  Z_{1,t}Z_{2,t}|F'_{j, 1:t-1} X_{j,1:t-1}^n \mathbf{A}_{1:t-1})\\
& \leq I(S^{\star} ; \bar{C}_{j,t-1} Z_{1,t}Z_{2,t} F'_{j, 1:t-1} X_{j,1:t-1}^n \mathbf{A}_{1:t-1})\\
& = I(S^{\star};  \bar{C}_{j,t-1}) \\
&\phantom{-}+ I(S^{\star} ; Z_{1,t}Z_{2,t} F'_{j, 1:t-1} X_{j,1:t-1}^n \mathbf{A}_{1:t-1}|  \bar{C}_{j,t-1})\\
& = 0,
\end{align*}
where the last equality holds by the one-time pad lemma and because $S^{\star} - \bar{C}_{j,t-1}- ( Z_{1,t},Z_{2,t} ,F'_{j, 1:t-1}, X_{j,1:t-1}^n, \mathbf{A}_{1:t-1})  $ forms a Markov chain; 
\item holds because for any $t \in \llbracket 1 , K \rrbracket$
\begin{align*}
&I(  F'_{j, t}  X_{j,t}^n \mathbf{A}_{t};  Z_{1,t}Z_{2,t}|S^{\star} F'_{j, 1:t-1} X_{j,1:t-1}^n \mathbf{A}_{1:t-1})\\
& \leq H(  F'_{j, t}  X_{j,t}^n \mathbf{A}_{t} |S^{\star})\\
&\phantom{-}- H(  F'_{j, t}  X_{j,t}^n \mathbf{A}_{t} |  Z_{1,t}Z_{2,t} S^{\star} F'_{j, 1:t-1} X_{j,1:t-1}^n \mathbf{A}_{1:t-1}) \\
&  = H(  F'_{j, t}  X_{j,t}^n \mathbf{A}_{t} |S^{\star})- H(  F'_{j, t}  X_{j,t}^n \mathbf{A}_{t} |  Z_{1,t}Z_{2,t} S^{\star}  ) \\
& = I(  F'_{j, t}  X_{j,t}^n \mathbf{A}_{t} ;  Z_{1,t}Z_{2,t}| S^{\star}  )
\end{align*}
where the first inequality holds by the definition of the mutual information and because conditioning reduces entropy, and the first equality  holds because $ (F'_{j, t} , X_{j,t}^n, \mathbf{A}_{t}  )-  (Z_{1,t},Z_{2,t}, S^{\star}) - (F'_{j, 1:t-1}, X_{j,1:t-1}^n, \mathbf{A}_{1:t-1})   $ forms a Markov chain; 
\item holds because $F'_{j, t}$ can be recovered from $(\bar{C}_{j,t}[1],\bar{C}_{j,t}[2],S^{\star})$  for any $t \in \llbracket 1 , K \rrbracket$; 
\item holds  by the chain rule and because for any $t \in \llbracket 1 , K \rrbracket$, $  I(  \bar{C}_{j,t}[1]\bar{C}_{j,t}[2]     X_{j,t}^n \mathbf{A}_{t} ;  Z_{1,t}Z_{2,t}   ) = 0$ by the privacy of the client selection ensured in Line 7 of Algorithm \ref{algrtoc}; 
\item holds because for any $t \in \llbracket 1 , L-1 \rrbracket$
\begin{align*}
&I(S^{\star} ;  Z_{1,t}Z_{2,t}  | \bar{C}_{j,t}[1]\bar{C}_{j,t}[2]    X_{j,t}^n \mathbf{A}_{t} ) \\
& \leq    H(S^{\star}) - H(S^{\star} |  Z_{1,t}Z_{2,t}   \bar{C}_{j,t}[1] \bar{C}_{j,t}[2]    X_{j,t}^n \mathbf{A}_{t} ) \\
& =   H(S^{\star}) - H(S^{\star} |      \bar{C}_{j,t}[1]\bar{C}_{j,t}[2]  ) \\
& =   I(S^{\star} ;   \bar{C}_{j,t}[1]\bar{C}_{j,t}[2] ) \\
& = 0,
\end{align*}
where  the first inequality holds by the definition of the mutual information and because conditioning reduces entropy, the first equality holds because $S^{\star}-(\bar{C}_{j,t}[1],\bar{C}_{j,t}[2])-(Z_{1,t},Z_{2,t}   , X_{j,t}^n, \mathbf{A}_{t})$ forms a Markov chain, and the last equality holds by the one-time pad lemma.
\end{enumerate}
\subsubsection{Servers' mutual privacy}
We first show the following lemma.
\begin{lem} \label{lem3a}
    We have
    \begin{align*}
  I( \bar{C}_{1,1:K}  X^n_{1,1:K} \mathbf{A}_{1:K};\bar{C}_{2,1:K})=  0. \numberthis \label{eqlim2m}
\end{align*}
\end{lem}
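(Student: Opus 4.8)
The plan is to decompose the mutual information across the $K$ rounds of Algorithm \ref{algrtoc} and argue that each round contributes nothing, exploiting that the sequences $(C_{1,t,i})$ and $(C_{2,t,i})$ are built from the files of Server~1 and Server~2 together with \emph{independent} local randomness on each side. First I would introduce the abbreviations $\bar C_{1,1:t}$, $X^n_{1,1:t}$, $\mathbf{A}_{1:t}$ for the corresponding tuples restricted to the first $t$ invocations of Algorithm~\ref{alg:ot}, and write, by the chain rule,
\begin{align*}
I(\bar C_{1,1:K} X^n_{1,1:K} \mathbf{A}_{1:K}; \bar C_{2,1:K})
= \sum_{t=1}^{K} I(\bar C_{1,1:K} X^n_{1,1:K} \mathbf{A}_{1:K}; \bar C_{2,t} \mid \bar C_{2,1:t-1}).
\end{align*}
The key structural facts are: (i) $\bar C_{1,1:K}$ is a deterministic function of $(F_{1,\mathcal{L}_1}, (S_{1,t,i})_{t,i})$ and hence independent of everything on Server~2's side; and (ii) conditioned on $\bar C_{1,1:K}$, the transcript $(X^n_{1,1:K}, \mathbf{A}_{1:K})$ produced by the $K$ runs of Algorithm~\ref{alg:ot} depends on $\bar C_{2,1:K}$ only through the servers'-mutual-privacy guarantee already established for the $L_1=L_2=2$ scheme in Section~\ref{secL1L22}, namely $I(F_{1,1}F_{1,2}X_1^n\mathbf{A}; F_{2,1}F_{2,2})=0$ applied in each round.

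Concretely, for each round $t$ I would apply the $L_1=L_2=2$ servers'-mutual-privacy result with the substitutions $F_{1,1}F_{1,2}\leftarrow \bar C_{1,t}[1]\bar C_{1,t}[2]$, $F_{2,1}F_{2,2}\leftarrow \bar C_{2,t}[1]\bar C_{2,t}[2]$, $X_1^n\leftarrow X^n_{1,t}$, $\mathbf{A}\leftarrow \mathbf{A}_t$, to get $I(\bar C_{1,t} X^n_{1,t}\mathbf{A}_t; \bar C_{2,t})=0$ for that round in isolation. To promote this to the joint statement, I would set up the Markov chain
\begin{align*}
(X^n_{1,t}, \mathbf{A}_t) - (\bar C_{1,t}, \bar C_{2,t}) - (\bar C_{1,1:K}, \bar C_{2,1:K}, X^n_{1,1:t-1}, \mathbf{A}_{1:t-1}, X^n_{1,t+1:K}, \mathbf{A}_{t+1:K}),
\end{align*}
valid because the local randomness $U_1, U_2$ and the channel noise are freshly drawn in each independent invocation of Algorithm~\ref{alg:ot}. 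Combining this Markov chain with the independence of $\bar C_{1,1:K}$ from $\bar C_{2,1:K}$ (which itself follows from the one-time-pad structure: the $S_{1,\cdot}$ and $S_{2,\cdot}$ mask the files on each side independently), each summand collapses: $I(\bar C_{1,1:K} X^n_{1,1:K}\mathbf{A}_{1:K}; \bar C_{2,t}\mid \bar C_{2,1:t-1}) \le I(\bar C_{1,1:K}; \bar C_{2,1:K}) + \sum_{s} I(\bar C_{1,s} X^n_{1,s}\mathbf{A}_s;\bar C_{2,s}) = 0$.

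The main obstacle I anticipate is bookkeeping the conditional independence structure carefully enough that the per-round privacy guarantees actually compose without leakage through the shared client selection and the shared masks $S_{1,\cdot}$; in particular one must check that conditioning on the earlier transcripts $(\bar C_{1,1:t-1}, X^n_{1,1:t-1}, \mathbf{A}_{1:t-1})$ does not create a dependence between $\bar C_{2,t}$ and $\bar C_{1,1:K}$ that was absent unconditionally. This is handled by noting that $\bar C_{1,1:K}$ and $\bar C_{2,1:K}$ are functions of disjoint, independent randomness (Server~1's files and $S_{1,\cdot}$ versus Server~2's files and $S_{2,\cdot}$), and the channel/public-communication transcript in round $s$ is, conditioned on the round-$s$ inputs $(\bar C_{1,s}, \bar C_{2,s}, Z_{1,s}, Z_{2,s})$, generated by independent fresh randomness, so the whole collection factorizes appropriately; the remaining inequalities are the same chain-rule and "conditioning reduces entropy" manipulations used in the preceding subsections, so I would not grind through them in detail.
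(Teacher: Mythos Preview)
Your decomposition via the chain rule on $\bar C_{2,1:K}$ is a reasonable starting point and different from the paper's (which chains on $(\bar C_{1,t},X^n_{1,t},\mathbf{A}_t)$), but the core step does not go through as written. The Markov chain
\[
(X^n_{1,t}, \mathbf{A}_t) - (\bar C_{1,t}, \bar C_{2,t}) - (\bar C_{1,1:K}, \bar C_{2,1:K}, X^n_{1,1:t-1}, \mathbf{A}_{1:t-1}, X^n_{1,t+1:K}, \mathbf{A}_{t+1:K})
\]
is false: the transcript $\mathbf{A}_t$ depends on $(Z_{1,t},Z_{2,t})$, and the transcripts $\mathbf{A}_s$ for $s\neq t$ depend on $(Z_{1,s},Z_{2,s})$; all of these are deterministic functions of the \emph{single} pair $(Z_1,Z_2)$, so the rounds are coupled through the client's selection, and conditioning on $(\bar C_{1,t},\bar C_{2,t})$ (which is independent of $(Z_1,Z_2)$) does not break that coupling. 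Your ``fresh randomness in each round'' argument covers $X^n_{1,\cdot},X^n_{2,\cdot}$ but not $(Z_{1,\cdot},Z_{2,\cdot})$. You flag this obstacle yourself, but the resolution you offer (``conditioned on the round-$s$ inputs $(\bar C_{1,s},\bar C_{2,s},Z_{1,s},Z_{2,s})$ the transcript is generated by independent fresh randomness'') only shows a Markov chain \emph{after} also conditioning on $(Z_{1,t},Z_{2,t})$, which is not the chain you then use; and the final displayed inequality bounding the summand by $I(\bar C_{1,1:K};\bar C_{2,1:K})+\sum_s I(\bar C_{1,s}X^n_{1,s}\mathbf{A}_s;\bar C_{2,s})$ is asserted without any derivation.

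The missing ingredient is the \emph{client-privacy} guarantee of the sub-protocol, not only its servers'-mutual-privacy guarantee. The paper's proof makes this explicit: it first shows (via the per-round identity $I(\bar C_{1,t}X^n_{1,t}\mathbf{A}_t;Z_{1,t}Z_{2,t})=0$) that $\sum_t I(\bar C_{1,t}X^n_{1,t}\mathbf{A}_t;Z_{1,t}Z_{2,t}\mid \bar C_{1,1:t-1}X^n_{1,1:t-1}\mathbf{A}_{1:t-1})=0$, and then, when chaining on Server~1's variables, inserts $(Z_{1,t},Z_{2,t})$ into each term; only after conditioning on $(Z_{1,t},Z_{2,t})$ does the genuine Markov chain $(\bar C_{1,t},X^n_{1,t},\mathbf{A}_t)-(\bar C_{2,t},Z_{1,t},Z_{2,t})-(\bar C_{1,1:t-1},X^n_{1,1:t-1},\mathbf{A}_{1:t-1})$ hold, reducing everything to the per-round identities $I(\bar C_{1,t}X^n_{1,t}\mathbf{A}_t;\bar C_{2,t}\mid Z_{1,t}Z_{2,t})=0$ and $I(\bar C_{1,t}X^n_{1,t}\mathbf{A}_t;Z_{1,t}Z_{2,t})=0$. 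Your outline can be repaired along the same lines, but as it stands it invokes only one of the two sub-protocol guarantees and relies on a conditional-independence claim that is not true.
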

\begin{proof}
    See Appendix \ref{App_lem3}.
\end{proof}
Then, using Lemma \ref{lem3a}, we have
\begin{align*}
 &I(F_{1, \mathcal{L}_1} X^n_{1,1:K} \mathbf{A}_{1:K};F_{2, \mathcal{L}_{2}}  ) \\
 & \leq  I( \bar{C}_{1,1:K} F_{1, \mathcal{L}_1} X^n_{1,1:K} \mathbf{A}_{1:K};F_{2, \mathcal{L}_{2}}  ) \\
 &  = I( \bar{C}_{1,1:K}  X^n_{1,1:K} \mathbf{A}_{1:K};F_{2, \mathcal{L}_{2}}  ) \\
&\phantom{-}+ I( F_{1, \mathcal{L}_1};F_{2, \mathcal{L}_{2}}  | \bar{C}_{1,1:K}  X^n_{1,1:K} \mathbf{A}_{1:K} )\\
  & \stackrel{(a)} = I( \bar{C}_{1,1:K}  X^n_{1,1:K} \mathbf{A}_{1:K};F_{2, \mathcal{L}_{2}}  )  \\
  &  \leq I( \bar{C}_{1,1:K}  X^n_{1,1:K} \mathbf{A}_{1:K};\bar{C}_{2,1:K}  F_{2, \mathcal{L}_{2}}  )  \\
    & = I( \bar{C}_{1,1:K}  X^n_{1,1:K} \mathbf{A}_{1:K};\bar{C}_{2,1:K}   ) \\
&\phantom{-} + I( \bar{C}_{1,1:K}  X^n_{1,1:K} \mathbf{A}_{1:K};  F_{2, \mathcal{L}_{2}}  | \bar{C}_{2,1:K})\\
        & \stackrel{(b)}= I( \bar{C}_{1,1:K}  X^n_{1,1:K} \mathbf{A}_{1:K};\bar{C}_{2,1:K}   )  \\
        & \stackrel{(c)}= 0,
\end{align*} 
where \begin{enumerate}[(a)]
    \item holds because \begin{align*}
    &I( F_{1, \mathcal{L}_1};F_{2, \mathcal{L}_{2}}  | \bar{C}_{1,1:K}  X^n_{1,1:K} \mathbf{A}_{1:K} ) \\
    &\leq I( F_{1, \mathcal{L}_1}\bar{C}_{1,1:K}  X^n_{1,1:K};F_{2, \mathcal{L}_{2}}  |  \mathbf{A}_{1:K} )\\
    &=0;\end{align*} 
    \item holds because \begin{align*}
    &I( \bar{C}_{1,1:K}  X^n_{1,1:K} \mathbf{A}_{1:K};  F_{2, \mathcal{L}_{2}}  | \bar{C}_{2,1:K}) \\
    &= I( \mathbf{A}_{1:K};  F_{2, \mathcal{L}_{2}}  | \bar{C}_{2,1:K}) \\
    &\phantom{-} I( \bar{C}_{1,1:K}  X^n_{1,1:K} ;  F_{2, \mathcal{L}_{2}}  | \bar{C}_{2,1:K} \mathbf{A}_{1:K}) \\
    &\leq I( \mathbf{A}_{1:K};  F_{2, \mathcal{L}_{2}}  | \bar{C}_{2,1:K}) \\
    &\phantom{-}+ I( \bar{C}_{1,1:K}  X^n_{1,1:K} ;  F_{2, \mathcal{L}_{2}}   \bar{C}_{2,1:K} |\mathbf{A}_{1:K})\\
    &= I( \mathbf{A}_{1:K};  F_{2, \mathcal{L}_{2}}  | \bar{C}_{2,1:K})=0;\end{align*} 
    \item  holds by~\eqref{eqlim2m}.
\end{enumerate}

Similarly, we have
\begin{align*}
 I(F_{2, \mathcal{L}_2} X^n_{2,1:K} \mathbf{A}_{1:K};F_{1, \mathcal{L}_{1}}  ) = 0.
\end{align*} 
\subsubsection{Rates}

From Section \ref{secrate2}, the lengths of the files in Algorithm \ref{alg:ot} for the case $L_1=L_2=2$ and $n$ channel uses is
$\left(\alpha M, \bar{\alpha} M \right)$,
hence, the rate pair achieved with Algorithm \ref{algrtoc1}, which utilizes $(L_1-1) \times (L_2-1)$ times  Algorithm \ref{alg:ot} with subsequences of files from Server 1 and Server 2, with size $nR_1/(L_2-1)$ and $nR_2/(L_1-1)$, respectively, is 
\begin{align*}
	R_1 & =  (L_2-1) \frac{ \alpha M}{n(L_1-1)  (L_2-1)}\xrightarrow{n \to \infty} \frac{\alpha}{2(L_1-1)},\\
		R_2 & = (L_1-1) \frac{\bar{\alpha} M}{n(L_1-1)  (L_2-1)}\xrightarrow{n \to \infty} \frac{\bar{\alpha}}{2(L_2-1)}.
\end{align*}
\begin{remark}   The download cost from Server~$i\in\{1,2\}$ is $2nR_i(L_i-1)$ bits, i.e.,  $2(L_i-1)$ bits per bit of downloaded file. For Server~$i\in\{1,2\}$, the total number of bitwise XOR operations is $2n R_i (2L_i-3)$, i.e., $2 (2L_i-3)$ bitwise XOR operations per bit of downloaded file.
 \end{remark}
\section{Converse part of Theorem \ref{th2}} \label{secproof2}
We first establish the following outer bound on the dual-source SPIR capacity region.
\begin{prop} \label{th1}
The dual-source SPIR capacity region is such that \begin{align*} 
\mathcal{C}_{\textup{SPIR}^2}(L_1,L_2)  & \subseteq \left\{ (R_1,R_2) :(L_1-1) R_1 + (L_2-1) R_2 \right.\\
& \phantom{------l}\left. \leq  \textstyle\max_{p_{X_1}p_{X_2}}H(X_1X_2|Y)\right\}.\end{align*}
\end{prop}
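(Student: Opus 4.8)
The plan is to separate the role of the binary adder MAC, which I bound by a single-letterization, from the role of the files, which I control through the reliability and privacy requirements. Let $\epsilon_n$ denote a generic nonnegative quantity with $\lim_{n\to\infty}\epsilon_n=0$, absorbing both Fano terms arising from \eqref{eqre} and the vanishing mutual informations in \eqref{eqpra}--\eqref{eqpr2}; also write $F_i\triangleq(F_{i,\mathcal{L}_i},U_i)$ for $i\in\{1,2\}$.

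\emph{Channel bound.} I would first show
\begin{align*}
H\big(X_1^n X_2^n \mid Y^n U_0 Z_1 Z_2\big)\ \leq\ n\max_{p_{X_1}p_{X_2}}H(X_1X_2|Y).
\end{align*}
The key structural fact is that, for every $t\in\llbracket 1,n\rrbracket$, the inputs $(X_1)_t$ and $(X_2)_t$ are conditionally independent given $\big(U_0,Z_1,Z_2,X_1^{t-1},X_2^{t-1}\big)$. I would prove this by induction on $t$: conditioned on $\big(U_0,Z_1,Z_2,X_1^{t-1},X_2^{t-1}\big)$ the outputs $Y^{t-1}$ are fixed, hence $(A_{1,i})_{i\leq t-1}$ is a deterministic function of $F_1$ and $(A_{2,i})_{i\leq t-1}$ a deterministic function of $F_2$, so $(X_1)_t$ is a function of $F_1$ alone and $(X_2)_t$ a function of $F_2$ alone; since $F_1$ and $F_2$ are independent given $(U_0,Z_1,Z_2)$ and conditioning a product law on the product event $\{(X_1)_t=x_1\}\cap\{(X_2)_t=x_2\}$ keeps it a product law, the property passes from $t$ to $t+1$. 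Then I would expand $H(X_1^nX_2^n\mid Y^nU_0Z_1Z_2)$ by the chain rule, drop the future outputs $Y_{t+1}^n$ from the conditioning of the $t$-th term (which only increases it, as $Y^{t-1}$ is a function of $X_1^{t-1}X_2^{t-1}$), and note that for each value of the conditioning variables $((X_1)_t,(X_2)_t)$ is a product pair, so conditioning on their sum $Y_t$ leaves conditional entropy at most $\max_{p_{X_1}p_{X_2}}H(X_1X_2|X_1+X_2)$.

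\emph{From the channel bound to the file rates.} Next I would lower-bound the same entropy. Since conditioning reduces entropy and $Y^n=X_1^n+X_2^n$,
\begin{align*}
H\big(X_1^n X_2^n \mid Y^n U_0 Z_1 Z_2\big)\ \geq\ H\big(X_1^n X_2^n \mid Y^n U_0 \mathbf{A} Z_1 Z_2\big)\ \geq\ I\big(X_1^n X_2^n ; F_{1,\mathcal{L}_1}F_{2,\mathcal{L}_2}\mid Y^n U_0 \mathbf{A} Z_1 Z_2\big),
\end{align*}
and I would write this mutual information as $H(F_{1,\mathcal{L}_1}F_{2,\mathcal{L}_2}\mid Y^nU_0\mathbf{A}Z_1Z_2)-H(F_{1,\mathcal{L}_1}F_{2,\mathcal{L}_2}\mid Y^nU_0\mathbf{A}Z_1Z_2X_1^nX_2^n)$. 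For the first term I would drop the selected files and apply \eqref{eqpr2}, which, together with uniformity of the files and their independence of $(Z_1,Z_2)$, gives $H(F_{1,\mathcal{L}_1\backslash\{Z_1\}}F_{2,\mathcal{L}_2\backslash\{Z_2\}}\mid Y^nU_0\mathbf{A}Z_1Z_2)\geq(L_1-1)nR_1+(L_2-1)nR_2-\epsilon_n$. For the second term, reliability \eqref{eqre} already recovers $(F_{1,Z_1},F_{2,Z_2})$ from the client's view, so it remains to argue that the non-selected files become determined up to $n\epsilon_n$ once the receiver is additionally handed the channel inputs $X_1^n,X_2^n$; intuitively this should hold because each server, ignorant of the selection by \eqref{eqpra}--\eqref{eqprb}, must make \emph{every} one of its files extractable from the pair consisting of its channel input and the public transcript, leaving nothing beyond the MAC-hidden part of $(X_1^n,X_2^n)$ uncertain. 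Granting this, the second term is at most $n\epsilon_n$, and combining with the channel bound, dividing by $n$, and letting $n\to\infty$ gives $(L_1-1)R_1+(L_2-1)R_2\leq\max_{p_{X_1}p_{X_2}}H(X_1X_2|Y)$, which is the claimed inclusion.

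\emph{Main obstacle.} I expect the bound $H(F_{1,\mathcal{L}_1}F_{2,\mathcal{L}_2}\mid Y^nU_0\mathbf{A}Z_1Z_2X_1^nX_2^n)\leq n\epsilon_n$ to be the delicate point, and I would isolate it as a lemma: reliability \eqref{eqre} controls only the selected files, so transferring recoverability to the non-selected ones has to go through the client-privacy constraints \eqref{eqpra}--\eqref{eqprb}; the natural route is to compare the genuine protocol execution with the one in which the client uses a different selection index, a change of measure whose cost is the privacy slack, and this is where the argument must be carried out with care. Finally, to turn Proposition~\ref{th1} into the converse of Theorem~\ref{th2} one still needs $\max_{p_{X_1}p_{X_2}}H(X_1X_2|X_1+X_2)=\tfrac12$, attained at uniform inputs---this is precisely where one ``solves the optimization problem involving the distribution that maximizes the conditional entropy of two Bernoulli random variables given their sum,'' and it is an elementary but separate computation.
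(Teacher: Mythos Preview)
Your proposal is correct and follows essentially the same architecture as the paper: (i)~use the server-privacy constraint \eqref{eqpr2} to lower–bound $H(X_1^nX_2^n\mid Y^n\cdots)$ by the entropy of the non-selected files; (ii)~show that each server's non-selected files are (almost) determined by that server's channel input together with the public transcript—this is exactly your ``main obstacle'' and is the paper's Lemma~\ref{lem1}; (iii)~single-letterize the channel term. Your identification of the delicate point and the route to handle it (a change-of-measure from the true selection to an alternative one, paid for by the client-privacy slack \eqref{eqpra}--\eqref{eqprb}) is precisely what the paper does. Concretely, the paper combines a continuity lemma (Lemma~\ref{lem}) with a structural identity $I(F_{1,z_1};Y^nU_0\mid X_1^n\mathbf{A}_1,Z_1{=}z_1,Z_2{=}z_2)=0$ (Lemma~\ref{lema}), proved by unwinding the interactive rounds, so that after swapping $Z_1$ one can reinsert $(Y^n,U_0)$ in the conditioning for free and apply Fano; you should expect to need both ingredients.

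One genuine difference is in the single-letterization. The paper drops all conditioning and bounds $\sum_t H((X_1)_t(X_2)_t\mid Y_t)$ by $n\max_{p_{X_1}p_{X_2}}H(X_1X_2\mid Y)$ via a time-sharing variable, a step that tacitly requires the marginal inputs $((X_1)_t,(X_2)_t)$ to be independent. Your approach instead keeps $(U_0,Z_1,Z_2,X_1^{t-1},X_2^{t-1})$ in the conditioning and argues inductively that, given these, $(X_1)_t$ is a function of $F_1$ and $(X_2)_t$ of $F_2$, hence a product law at every time. This makes the passage to the product maximum rigorous (note that without the conditioning on $U_0$ the servers' inputs can be correlated through the client's shared randomness, and for the binary adder the unconstrained maximum of $H(X_1X_2\mid Y)$ is $1$, not $\tfrac12$). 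So your channel bound is a cleaner version of the paper's step~(e).
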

\begin{proof}

Consider a sequence of $(n,L_1,L_2,2^{nR_1},2^{nR_2})$ dual-source SPIR protocols that achieve the rate pair $(R_1,R_2)$. We prove the outer bound through a series of lemma. 
\begin{lem} \label{lema}
For any $z_1, z_2 \in \mathcal{L}_1 \times \mathcal{L}_2$, we have 
\begin{align}
	 I(F_{1,z_1}; Y^nU_0 | X_1^n  \mathbf{A}_1, Z_1=z_1, Z_2 =z_2 ) & =0, \label{eqlema1}\\
	 	 I(F_{2,z_2}; Y^nU_0 | X_2^n  \mathbf{A}_2, Z_1=z_1, Z_2 =z_2 ) & =0. \label{eqlema2}
\end{align}	
\end{lem}
\begin{proof}
It is sufficient to prove \eqref{eqlema1}, as the proof of \eqref{eqlema2} can be obtained by exchanging the roles of the servers.  We have \eqref{eqM}
\begin{figure*}
\begin{align}
	 I(F_{1, \mathcal{L}_1} U_1; Y^n U_0 | X_1^n  \mathbf{A}_1    Z_1 Z_2) \label{eqrep1} 
	&  = I(F_{1, \mathcal{L}_1}  U_1 ; Y^n U_0 | X_1^n  (M_{0,1}(j),M_{1}(j))_{j \in \llbracket 1, r \rrbracket}  Z_1 Z_2)  \\ \nonumber
	& \leq  I(F_{1, \mathcal{L}_1} U_1; Y^n U_0 M_{0,1}(r)| X_1^n  (M_{0,1}(j),M_{1}(j))_{j \in \llbracket 1, r -1 \rrbracket}M_{1}(r)   Z_1 Z_2)\\ \nonumber
	& \stackrel{(a)}=  I(F_{1, \mathcal{L}_1} U_1; Y^n U_0  | X_1^n  (M_{0,1}(j),M_{1}(j))_{j \in \llbracket 1, r -1 \rrbracket}M_{1}(r)   Z_1 Z_2)\\ \nonumber
	& \leq  I(F_{1, \mathcal{L}_1} U_1 M_{1}(r) ; Y^n U_0  | X_1^n  (M_{0,1}(j),M_{1}(j))_{j \in \llbracket 1, r -1 \rrbracket}   Z_1 Z_2)\\ 
	& \stackrel{(b)}=  I(F_{1, \mathcal{L}_1} U_1   ; Y^n U_0  | X_1^n  (M_{0,1}(j),M_{1}(j))_{j \in \llbracket 1, r -1 \rrbracket}   Z_1 Z_2) \label{eqrep2} \\ \nonumber
	& \stackrel{(c)} \leq   I(F_{1, \mathcal{L}_1} U_1   ; Y^n U_0  | X_1^n    Z_1 Z_2)\\ \nonumber
	& \stackrel{(d)} = I(F_{1, \mathcal{L}_1} U_1;   U_0 | X_1^n   Z_1 Z_2) \\ \nonumber
	& \leq I(F_{1, \mathcal{L}_1} U_1  X_1^n  ;   U_0 |     Z_1 Z_2) \\  \nonumber
	& \stackrel{(e)}= I(F_{1, \mathcal{L}_1} U_1;  U_0 |      Z_1 Z_2) \\
    & = 0 \label{eqM}
\end{align}
\hrulefill
\end{figure*}
where 
\begin{enumerate}[(a)]
    \item holds by the chain rule and because by definition $M_{0,1}(r)$ is a function of $(Z_1,U_0,Y^n,(M_{1}(j))_{j \in \llbracket 1, r  \rrbracket})$;
    \item holds by the chain rule and because $M_{1}(r)$ is a function of $(F_{1, \mathcal{L}_1}, U_1 ,(M_{0,1}(j))_{j \in \llbracket 1, r-1  \rrbracket}) $;
    \item holds by repeating $r-1$ times the steps between Equations \eqref{eqrep1} and \eqref{eqrep2};
    \item holds because $ (F_{1, \mathcal{L}_1},U_1)-  (U_0, X_1^n,    Z_1, Z_2) - Y^n$ forms a Markov chain;
    \item holds by the chain rule and because $X_1^n$ is a function of $(F_{1, \mathcal{L}_1},U_1 )$. 
    \end{enumerate}
    Next, for any $z_1, z_2 \in  \mathcal{L}_1 \times \mathcal{L}_2$, we have 
\begin{align}
	& 	I(F_{1,z_1}; Y^n U_0 | X_1^n  \mathbf{A}_1, Z_1=z_1, Z_2 =z_2 ) \nonumber \\ \nonumber
	& \leq I(F_{1, \mathcal{L}_1} U_1; Y^n U_0 | X_1^n \mathbf{A}_1, Z_1 =z_1, Z_2=z_2)\\ \nonumber
	& \leq (\mathbb{P}[(Z_1,Z_2) = (z_1,z_2)])^{-1} I(F_{1, \mathcal{L}_1}  U_1; Y^n U_0| X_1^n   \mathbf{A}_1 Z_1 Z_2)\\ \nonumber
	& =0,
\end{align}
where the last equality holds by  \eqref{eqM}.
\end{proof}
Next, using Lemmas \ref{lema} and \ref{lem}, we prove Lemma \ref{lem1}.
\begin{lem} [{\cite[Lemma 3]{ahlswede2013oblivious}}] \label{lem}
Let $X,Y,Z$	be random variables defined over the finite sets $\mathcal{X},\mathcal{Y},\mathcal{Z}$, respectively. For any $z_1,z_2 \in \mathcal{Z}$ such that $p \triangleq \mathbb{P}[Z=z_1]>0$ and $q \triangleq \mathbb{P}[Z=z_2]>0$, we have
\begin{multline*}
	|H(X|Y,Z=z_1) - H(X|Y,Z=z_2)| \\ \leq 1+3 \log |\mathcal{X}| \sqrt{\frac{(p+q)\ln 2}{2pq}I(XY;Z)}.
\end{multline*}
\end{lem}

\begin{lem} \label{lem1}
We have 
\begin{align}
	H( F_{1, \mathcal{L}_1 \backslash \{{Z}_1\}}    | X_1^n   \mathbf{A} {Z}_1 {Z}_2 ) =o(n),  \label{eqlem11}\\
 H( F_{2, \mathcal{L}_2 \backslash \{{Z}_2\}}  | X_2^n   \mathbf{A} {Z}_1 {Z}_2 )=o(n). \label{eqlem12}
\end{align}
\end{lem}
\begin{proof}
It is sufficient to prove \eqref{eqlem11}, as the proof of \eqref{eqlem12} can be obtained by exchanging the roles of the servers. For any $z_1, z_2 \in \mathcal{L}_1 \times \mathcal{L}_2$, we write $\mathcal{L}_1 \backslash \{ z_1\} = \{\gamma_i : i \in \llbracket 1 , L_1-1 \rrbracket \}$, and we have
\begin{align}
	& H(F_{1, \mathcal{L}_1 \backslash \{z_1\}}  | X_1^n  \mathbf{A},Z_1 =z_1,Z_2 =z_2) \nonumber\\ \nonumber
	& \leq H(F_{1, \mathcal{L}_1 \backslash \{z_1\}} | X_1^n  \mathbf{A}_1,Z_1 =z_1,Z_2 =z_2) \nonumber\\ \nonumber
	& \stackrel{(a)} \leq H(F_{1, \mathcal{L}_1 \backslash \{z_1\}} | X_1^n  \mathbf{A}_1, Z_1 =\gamma_1, Z_2 = {z}_2) + 1\\
&\phantom{-}+ O\left( nR_1  \sqrt{ I (F_{1, \mathcal{L}_1 \backslash \{z_1\}}  X_1^n   \mathbf{A}_1;Z_1Z_2)} \right)\nonumber\\ \nonumber
	& \stackrel{(b)} \leq H(F_{1, \mathcal{L}_1 \backslash \{z_1\}}| X_1^n  \mathbf{A}_1 , Z_1 = \gamma_1, Z_2 = {z}_2) + o(n)\\ \nonumber
		& \stackrel{(c)} \leq H(F_{1, \mathcal{L}_1 \backslash \{z_1,\gamma_1\}}| X_1^n  \mathbf{A}_1 , Z_1 =\gamma_1, Z_2 = {z}_2) \\
&\phantom{-}+ H(F_{1, \gamma_1}| X_1^n  \mathbf{A}_1 , Z_1 =\gamma_1, Z_2 = {z}_2) + o(n)\nonumber \\ \nonumber
						& \stackrel{(d)} \leq \sum_{i=1}^{L_1-1} H(F_{1, \gamma_i}| X_1^n  \mathbf{A}_1 , Z_1 =\gamma_i, Z_2 = {z}_2) + o(n)\\ \nonumber
	& \stackrel{(e)} = \sum_{i=1}^{L_1-1} H(F_{1, \gamma_i} | Y^n U_0 X_1^n  \mathbf{A}_1, Z_1 =\gamma_i, Z_2 = {z}_2) + o(n)\\ \nonumber 
	& \stackrel{(f)} \leq \sum_{i=1}^{L_1-1} H(F_{1, \gamma_i}  | Y^n U_0 \mathbf{A}_1 \widehat{F}_{1,\gamma_i}  , Z_1 =\gamma_i, Z_2 = {z}_2)+o(n) \\ 
	& \stackrel{(g)} \leq o(n), \label{eqh2}
\end{align}
where 
\begin{enumerate}[(a)]
    \item holds by Lemma \ref{lem};
    \item holds by \eqref{eqpra};
    \item holds by the chain rule and because conditioning reduces entropy;
    \item holds by repeating $L_1-2$ times the steps between $(a)$ and $(c)$;
    \item holds by Lemma \ref{lema};
    \item holds because for any $i \in \llbracket 1 , L_1-1 \rrbracket$, $\widehat{F}_{1,\gamma_i}$ is a function of $(Y^n, U_0, \mathbf{A}_1)$;
    \item holds by Fano's inequality and \eqref{eqre}. 
    \end{enumerate}
    Finally, we have  
\begin{align*}
&H(F_{1, \mathcal{L}_1 \backslash \{Z_1\}}    | X_1^n \mathbf{A}_1 Z_1 Z_2)\\
& =   \textstyle\sum_{z_1,z_2} \mathbb{P}[(Z_1,Z_2) = (z_1,z_2)] \\
& \phantom{-----} \times H(F_{1, \mathcal{L}_1 \backslash \{Z_1\}}   | X_1^n  \mathbf{A}_1 ,Z_1=z_1, Z_2=z_2)\\
& = o(n),
\end{align*}
where the last equality holds by~\eqref{eqh2}.

\end{proof}
Next, using Lemma \ref{lem1} we obtain the following lemma.
\begin{lem} \label{lemb}
We have
	\begin{align*}
	 H(F_{1, \mathcal{L}_1 \backslash \{{Z}_1\}} F_{2, \mathcal{L}_2 \backslash \{{Z}_2\}} | {Z}_1 {Z}_2 )\leq H(X_1^n X_2^n | Y^n ) + o(n).
	 \end{align*}
\end{lem}

\begin{proof}
	We have 	
\begin{align*}
  & H(F_{1, \mathcal{L}_1 \backslash \{{Z}_1\}} F_{2, \mathcal{L}_2 \backslash \{{Z}_2\}}  | {Z}_1 {Z}_2 ) \\
	&= H(F_{1, \mathcal{L}_1 \backslash \{{Z}_1\}} F_{2, \mathcal{L}_2 \backslash \{{Z}_2\}} | Y^n  \mathbf{A} {Z}_1 {Z}_2 ) \\
    &\phantom{-}+ I(F_{1, \mathcal{L}_1 \backslash \{{Z}_1\}} F_{2, \mathcal{L}_2 \backslash \{{Z}_2\}} ; Y^n  \mathbf{A}  | {Z}_1 {Z}_2 )\\
	&\leq H(F_{1, \mathcal{L}_1 \backslash \{{Z}_1\}} F_{2, \mathcal{L}_2 \backslash \{{Z}_2\}} | Y^n  \mathbf{A} {Z}_1 {Z}_2 ) \\
    &\phantom{-}+ I(F_{1, \mathcal{L}_1 \backslash \{{Z}_1\}} F_{2, \mathcal{L}_2 \backslash \{{Z}_2\}} ; Y^n  \mathbf{A}   {Z}_1 {Z}_2 )\\
	&\stackrel{(a)} \leq  H(F_{1, \mathcal{L}_1 \backslash \{{Z}_1\}} F_{2, \mathcal{L}_2 \backslash \{{Z}_2\}} | Y^n  \mathbf{A} {Z}_1 {Z}_2 )  + o(n)\\
	&\leq H(F_{1, \mathcal{L}_1 \backslash \{{Z}_1\}} F_{2, \mathcal{L}_2 \backslash \{{Z}_2\}} X_1^n X_2^n| Y^n \mathbf{A} {Z}_1 {Z}_2 ) + o(n)\\
	& = H( X_1^n X_2^n| Y^n  \mathbf{A} {Z}_1 {Z}_2 ) \\
    &\phantom{-}+ H(F_{1, \mathcal{L}_1 \backslash \{{Z}_1\}} F_{2, \mathcal{L}_2 \backslash \{{Z}_2\}}| X_1^n X_2^n Y^n  \mathbf{A} {Z}_1 {Z}_2 )+o(n)\\
	& \stackrel{(b)} \leq  H(X_1^n X_2^n | Y^n ) + H(F_{1, \mathcal{L}_1 \backslash \{{Z}_1\}}   | X_1^n   \mathbf{A} {Z}_1 {Z}_2 ) \\
    &\phantom{-}+ H( F_{2, \mathcal{L}_2 \backslash \{{Z}_2\}} |  X_2^n   \mathbf{A} {Z}_1 {Z}_2 ) +o(n)\\
	& \stackrel{(c)} \leq  H(X_1^n X_2^n | Y^n ) + o(n),
\end{align*}
where \begin{enumerate}[(a)]
    \item holds by \eqref{eqpr2};
    \item holds by the chain rule and because conditioning reduces entropy;
    \item holds  by Lemma \ref{lem1}.
    \end{enumerate}
\end{proof}
Finally, we have
\begin{align*}
	&(L_1 -1)nR_1 + (L_2-1)nR_2 \\
	&\stackrel{(a)} = H(F_{1, \mathcal{L}_1 \backslash \{{Z}_1\}} F_{2, \mathcal{L}_2 \backslash \{{Z}_2\}} | {Z}_1 {Z}_2 )\\
	& \stackrel{(b)} \leq  H(X_1^n X_2^n | Y^n ) + o(n)\\
	& \stackrel{(c)} \leq \sum_{t=1}^n H( (X_1)_t (X_2)_t | Y_t) + o(n) \\
	& \stackrel{(d)} =  n H( (X_1)_T (X_2)_T | Y_T T) + o(n)\\
	& \leq n H( (X_1)_T (X_2)_T | Y_T) + o(n)\\
	& \stackrel{(e)}\leq n \max_{p_{X_1}p_{X_2}}H(X_1X_2|Y) + o(n),
\end{align*}
where 
\begin{enumerate}[(a)]
    \item holds by independence and uniformity of the files;
    \item holds by Lemma \ref{lemb};
    \item holds by the chain rule and because conditioning reduces entropy;
    \item holds by defining $T$ as the uniform random variable over $\llbracket 1,n\rrbracket$;
    \item holds with the definition $Y \triangleq X_1 + X_2$.
    \end{enumerate}
\end{proof}

Finally, by Proposition \ref{th1}, it is sufficient to use the following lemma to obtain the converse part of Theorem \ref{th2}.
\begin{lem} \label{lemopt}
Consider the random variable $Y$ taking values over $ \mathcal{Y} \triangleq \{ 0,1,2\}$ and defined by
$
Y \triangleq X_1 + X_2,$
where $X_1$ and $X_2$ are independent random variables taken values in $ \mathcal{X} \triangleq \{ 0,1\}$ with probability distribution $p_{X_1}$ and $p_{X_2}$, respectively. We have
\begin{align} 
\max_{p_{X_1}p_{X_2}}H(X_1X_2|Y) = \frac{1}{2}. \label{lemsymeq}
\end{align}
\end{lem}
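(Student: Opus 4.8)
The plan is to make the optimization explicit and then reduce it to a single scalar inequality. Write $p\triangleq\mathbb{P}[X_1=1]$, $q\triangleq\mathbb{P}[X_2=1]$, and set $a\triangleq p(1-q)$, $b\triangleq(1-p)q$. Conditioned on $Y=0$ or $Y=2$ the pair $(X_1,X_2)$ is deterministic, whereas conditioned on $Y=1$ it is Bernoulli with parameter $\theta\triangleq a/(a+b)$; hence, with $H_b$ the binary entropy function,
\[
H(X_1X_2|Y)=(a+b)\,H_b(\theta),\qquad a+b=\mathbb{P}[Y=1].
\]
Taking $p=q=1/2$ gives $a=b=1/4$ and $\theta=1/2$, so $H(X_1X_2|Y)=\tfrac12$; hence ``$\,\geq\tfrac12$'' is immediate and the whole content is the matching upper bound $H(X_1X_2|Y)\leq\tfrac12$ for every product distribution (the degenerate cases $a+b=0$ and $\theta\in\{0,1\}$ give $H(X_1X_2|Y)=0$ and are disposed of in one line).

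For the upper bound I would first read off from the product structure the constraint $\sqrt a+\sqrt b\leq 1$: by AM--GM, $\sqrt a=\sqrt{p(1-q)}\leq\tfrac{p+(1-q)}{2}$ and $\sqrt b=\sqrt{(1-p)q}\leq\tfrac{(1-p)+q}{2}$, and these sum to $1$. Since moreover $(\sqrt a+\sqrt b)^2=(a+b)\big(\sqrt\theta+\sqrt{1-\theta}\big)^2$, the constraint gives $a+b\leq\big(\sqrt\theta+\sqrt{1-\theta}\big)^{-2}$, so
\[
H(X_1X_2|Y)\;\leq\;\frac{H_b(\theta)}{\big(\sqrt\theta+\sqrt{1-\theta}\big)^2}\;=\;\frac{H_b(\theta)}{1+2\sqrt{\theta(1-\theta)}}.
\]
It therefore suffices to prove the one--dimensional inequality $2H_b(\theta)\leq 1+2\sqrt{\theta(1-\theta)}$ for all $\theta\in[0,1]$.

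To close the scalar inequality I would chain two elementary estimates. First, $H_b(\theta)\leq\log_2\!\big(1+2\sqrt{\theta(1-\theta)}\big)$: putting $s\triangleq 1+2\sqrt{\theta(1-\theta)}=\big(\sqrt\theta+\sqrt{1-\theta}\big)^2\in[1,2]$, apply nonnegativity of the relative entropy of $\mathrm{Bern}(\theta)$ against the auxiliary distribution $Q\propto\big(\sqrt\theta,\sqrt{1-\theta}\big)$; expanding $0\leq D(\mathrm{Bern}(\theta)\,\|\,Q)$ collapses exactly to $0\leq-\tfrac12 H_b(\theta)+\log_2\!\big(\sqrt\theta+\sqrt{1-\theta}\big)=-\tfrac12 H_b(\theta)+\tfrac12\log_2 s$. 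Second, $\log_2 s\leq s/2$ on $[1,2]$: the function $s\mapsto s-2\log_2 s$ has derivative $1-\tfrac{2}{s\ln 2}<0$ on $[1,2]$ and vanishes at $s=2$, hence is nonnegative there. Combining, $2H_b(\theta)\leq 2\log_2 s\leq s$, i.e. $H_b(\theta)/s\leq\tfrac12$, and thus $H(X_1X_2|Y)\leq\tfrac12$, as needed.

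The main obstacle is exactly this scalar inequality: attacking it head-on by showing that $\theta\mapsto H_b(\theta)/\big(\sqrt\theta+\sqrt{1-\theta}\big)^2$ is maximized at $\theta=1/2$ via direct differentiation is possible but awkward (this function is not concave on $[0,1]$), so the point of routing through the auxiliary distribution $Q$ and the one-line bound $\log_2 s\leq s/2$ is to keep the argument short. Everything else is bookkeeping: the trivial cases $\mathbb{P}[Y=1]=0$ and $\theta\in\{0,1\}$, and the observation that the $Y\in\{0,2\}$ terms of $H(X_1X_2|Y)$ vanish because $(X_1,X_2)$ is then deterministic.
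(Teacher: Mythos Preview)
Your proof is correct, but it follows a genuinely different route from the paper's. Both start by collapsing $H(X_1X_2\,|\,Y)$ to $(a+b)H_b(\theta)$ with $a=p(1-q)$, $b=(1-p)q$, $\theta=a/(a+b)$. From there the paper proceeds by calculus: it checks that any interior critical point of $(p_1,p_2)\mapsto H(X_1X_2|Y)$ must satisfy $p_1+p_2=1$ (by computing $\partial_{p_1}f+\partial_{p_2}f$ and noting the log factor cannot vanish nontrivially), and then shows, through a further chain of derivative sign arguments after a change of variable, that $g(p_1)\triangleq f(p_1,1-p_1)$ is nondecreasing on $(0,1/2]$, yielding the value $g(1/2)=1/2$.

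Your argument instead \emph{decouples} the ``mass'' $a+b$ from the ``shape'' $\theta$: the AM--GM step $\sqrt a+\sqrt b\leq 1$ together with $(\sqrt a+\sqrt b)^2=(a+b)\big(\sqrt\theta+\sqrt{1-\theta}\big)^2$ gives a sharp bound on $a+b$ in terms of $\theta$, which reduces the two-variable optimization to the single scalar inequality $2H_b(\theta)\leq 1+2\sqrt{\theta(1-\theta)}$. You then close this cleanly via nonnegativity of $D\!\big(\mathrm{Bern}(\theta)\,\|\,Q\big)$ with $Q\propto(\sqrt\theta,\sqrt{1-\theta})$ (equivalently, a R\'enyi/Bhattacharyya-type bound $H_b(\theta)\leq\log_2\!\big(1+2\sqrt{\theta(1-\theta)}\big)$) and the elementary $\log_2 s\leq s/2$ on $[1,2]$. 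This avoids all of the paper's derivative manipulations (in particular its Step~4), at the cost of one inspired auxiliary distribution; it also makes transparent why equality occurs exactly at $\theta=1/2$ and $\sqrt a+\sqrt b=1$, i.e.\ $p=q=1/2$. Both proofs are self-contained; yours is shorter and more information-theoretic in flavor, while the paper's is a direct critical-point analysis.
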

Although the optimization problem in \eqref{lemsymeq} has symmetry, Lemma \ref{lemopt} is not trivial as symmetric optimization problems may not always have symmetric solutions.

\begin{proof}
Define $p_1 \triangleq p_{X_1}(1)$, $p_2 \triangleq p_{X_2}(1)$. We proceed in four steps.\\
\textbf{Step 1}. We have 
\begin{align*}
H(X_1X_2|Y) 
& = H(X_1|Y) + H(X_2|YX_1)\\
& \stackrel{(a)}= H(X_1|Y)\\
& \stackrel{(b)}=  H(X_1|Y=1)\mathbb{P} [Y=1] ,
\end{align*}
where $(a)$ holds because $X_2 = Y-X_1$, $(b)$ holds because   $Y=0 \implies X_1 =0$. Next, remark that $\mathbb{P}[X_1 =1|Y=1] = (p_1 \bar{p}_2 )/(p_1 \bar{p}_2 + \bar{p}_1 p_2)$ and $\mathbb{P}[Y=1] =p_1 \bar{p}_2 + \bar{p}_1 p_2$. Hence, we can rewrite the optimization problem in the left-hand side of \eqref{lemsymeq} as follows
\begin{align} \label{eqmax}
\max_{p_{X_1}p_{X_2}}H(X_1X_2|Y)
 = \max_{ (p_1, p_2) \in [0,1] \times [0,1]} f(p_1,p_2)  ,
\end{align}
where 
\begin{align*}
&f:]0,1[ \times ]0,1[ \to \mathbb{R}_+,\\
&(p_1,p_2) \mapsto    -\bar{p}_1 p_2 \log \frac{\bar{p}_1 p_2}{p_1 \bar{p}_2 + \bar{p}_1 p_2} -p_1 \bar{p}_2  \log \frac{p_1 \bar{p}_2 }{p_1 \bar{p}_2 + \bar{p}_1 p_2},
\end{align*}
which is continuously extended when $p_1,p_2 \in \{ 0,1\}$.\\ 
\textbf{Step 2}. We now prove that a necessary  condition for the maximum in \eqref{eqmax} to be achieved is $p_1 + p_2 =1$. After computations, we have 
\begin{multline} \label{eqlog}
\frac{\partial f}{\partial p_1}(p_1,p_2) + \frac{\partial f}{\partial p_2}(p_1,p_2) \\=  (p_1 + p_2 -1) \log \frac{p_1 \bar{p}_1 p_2 \bar{p}_2 }{(p_1 \bar{p}_2 + \bar{p}_1 p_2)^2}.
\end{multline}
We remark that the fraction in the $\log$ in \eqref{eqlog} is equal to $1$ if and only if 
\begin{align*}
0 & =  (p_1 \bar{p}_2 + \bar{p}_1 p_2)^2 - p_1 \bar{p}_1 p_2 \bar{p}_2\\
& =  (p_1 \bar{p}_2 )^2 + ( \bar{p}_1 p_2)^2 + p_1 \bar{p}_1 p_2 \bar{p}_2, 
\end{align*}
which is true if and only if $(p_1,p_2) \in \{  (0,0), (1,1)\}$, which would yield $H(X_1X_2|Y=1)=0$. Hence, we conclude that the maximum in \eqref{eqmax} is achieved when $p_1 + p_2 - 1 =0$.\\
\textbf{Step 3}. Define for $g : [0,1] \to \mathbb{R}_+, p_1 \mapsto f(p_1 ,1-p_1)$. Since for any $p_1 \in [0,1]$, $g(p_1) = g(1-p_1)$, by Step~2, we have 
\begin{align} \label{eqmax1}
\max_{p_{X_1}p_{X_2}}H(X_1X_2|Y)
 = \max_{ p_1 \in [0,1/2]} g(p_1).
\end{align}
\textbf{Step 4}. We now show that  $\max_{ p_1 \in [0,1/2]} g(p_1)=g(1/2)$. 
Specifically, it is sufficient to show that $g$ is non-decreasing over $]0,1/2]$ by showing that the derivative $g'$ of $g$ is non-negative over $]0,1/2]$. We have for $x \in ]0,1/2]$, $g'(x) =  2(1-x) \log \frac{(1-x)^2}{(1-x)^2+x^2}- 2x \log \frac{x^2}{(1-x)^2+x^2}$. Next, define $k:[1, + \infty[ \to \mathbb{R}, y \mapsto \log (1+y^2) - y \log (1+y^{-2})$ such that, by the change of variable $y \leftarrow (x^{-1}-1)$, we have $\forall x \in ]0,1/2],g'(x)\geq 0 \Leftrightarrow \forall y \in [1,+\infty[, k(y)\geq 0$. Since $k(1)=0$, to show that $\forall y \in [1,+\infty[, k(y)\geq 0$, it is sufficient to show that $k$ is non-decreasing over $[1,+\infty[$ by showing that its derivative is non-negative over $[1,+\infty[$. We have for  $y\in [1,+\infty[$, $k'(y) = \frac{2}{\ln 2} \frac{1+y}{1+y^2} - \log (1 + y^{-2})$. Since $k'(1) =  \frac{2 - \ln 2}{\ln 2} >0$ and $\lim_{y\to + \infty}k'(y) = 0$, to show that $k'$ is non-negative over $[1,+\infty[$, it is sufficient to show that $k'$ is non-increasing over $[1,+\infty[$, which is true since the derivative $k^{''}$ of $k'$ satisfies for $y\in [1,+\infty[$, $k^{''}(y) = \frac{2(1-y)(1+y)^2}{y(1+y^2)^2 \ln 2} \leq 0$.

All in all, we conclude that  $\max_{ p_1 \in [0,1/2]} = g(1/2) = 1/2$ and, by \eqref{eqmax1}, we obtain the lemma.
\end{proof}

\section{Concluding remarks} \label{secconc}
We studied information-theoretically secure SPIR in the absence of shared randomness, a noisy channel, and data replication. Instead, we leveraged a noiseless binary adder multiple-access channel and two non-colluding servers with independent content to provide information-theoretic security. We fully characterized the capacity region for this setting. While we assumed that all parties strictly follow the protocol, an open problem is to address potential attempts by parties to cheat, for instance, as considered in \cite{pinto2011achieving,dowsley2017oblivious,nascimento2008oblivious}. Other open problems include considering that the client has side information about the files stored in the servers,  for instance, as considered in \cite{chen2020capacity,8882304,8903451,9216062,9335998,10061385,zivarifard2024private,zivarifard2025private,Shekofteh2025A},  investigating achievable rates when only limited communication is allowed between the servers, and characterizing  the capacity region for more than two servers. For the latter research direction, a first challenge is extending the converse, specifically generalizing Lemma~\ref{lemopt} to more than two inputs. A second challenge arises in the achievability, in defining erasure sets that enable simultaneous private retrieval from all servers. Specifically, we conjecture that reducing the multi-file problem to a two-file-per-server scenario and repeatedly applying the two-file algorithm may not be optimal.

\appendices
 \section{One-time pad lemma} \label{applemotp}

\begin{lem} [One-time pad]\label{lemotp}
Let $\epsilon, \delta \geq 0$. Consider the random variables $A$, $B$, $C$, $D$ such that $C$ and $D$ are defined over $\{0,1 \}^m$, $H(D) \geq m - \delta$, and $I(D;ABC)\leq \epsilon$. Then, we have  
\begin{align}
I(A;B,C \oplus D) & \leq  I(A;B) + \epsilon +\delta. \label{eqotp1}
\end{align}
Moreover, if $I(A;C)=0$, then 
\begin{align}
I( A,C\oplus D;C) &\leq \epsilon +\delta. \label{eqotp2}
\end{align}
\end{lem}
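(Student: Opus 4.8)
The plan is to reduce both inequalities to two elementary ingredients. The first is the \emph{bijection identity}: once $C$ is among the conditioning variables, the map $D \mapsto C \oplus D$ is a bijection, so $H(C \oplus D \mid \cdot\,,C) = H(D \mid \cdot\,,C)$ for any additional conditioning. The second is that the two hypotheses $H(D) \geq m - \delta$ and $I(D;ABC) \leq \epsilon$ combine to give $H(D \mid ABC) \geq H(D) - I(D;ABC) \geq m - \epsilon - \delta$, i.e., the pad $D$ stays almost uniform on $\{0,1\}^m$ even after conditioning on $(A,B,C)$. Everything else is bookkeeping with the chain rule and ``conditioning reduces entropy.''

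For \eqref{eqotp1}, I would set $E \triangleq C \oplus D$ and write $I(A;B,E) = I(A;B) + I(A;E\mid B)$, so it suffices to show $I(A;E\mid B) \leq \epsilon + \delta$. Upper bound the first term of $I(A;E\mid B) = H(E\mid B) - H(E\mid AB)$ by $H(E\mid B) \leq H(E) \leq m$ since $E \in \{0,1\}^m$; lower bound the second term by $H(E\mid AB) \geq H(E\mid ABC) = H(D\mid ABC) \geq m - \epsilon - \delta$, using monotonicity of conditioning, the bijection identity, and the observation above. Subtracting yields $I(A;E\mid B) \leq \epsilon + \delta$, hence \eqref{eqotp1}.

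For \eqref{eqotp2}, again put $E \triangleq C \oplus D$, and use the extra hypothesis $I(A;C)=0$ to expand $I(A,E;C) = I(A;C) + I(E;C\mid A) = I(E;C\mid A)$. Then $I(E;C\mid A) = H(E\mid A) - H(E\mid AC) \leq H(E) - H(D\mid AC)$ by the bijection identity, and $H(D\mid AC) \geq H(D) - I(D;AC) \geq H(D) - I(D;ABC) \geq m - \epsilon - \delta$, so $I(E;C\mid A) \leq m - (m - \epsilon - \delta) = \epsilon + \delta$, which is \eqref{eqotp2}. I do not anticipate a real obstacle: the only care needed is directional consistency in each application of ``conditioning reduces entropy'' and invoking the bijection identity only with $C$ in the conditioning set. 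The statement is the standard one-time-pad argument made quantitative, with $\epsilon$ absorbing imperfect independence of the pad and $\delta$ absorbing its imperfect uniformity.
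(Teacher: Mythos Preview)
Your proof is correct and follows essentially the same route as the paper. Both proofs use the chain rule to isolate $I(A;C\oplus D\mid B)$ (resp.\ $I(C\oplus D;C\mid A)$), then bound it by $H(C\oplus D)-H(D\mid ABC)$ (resp.\ $H(C\oplus D)-H(D\mid AC)$) via the bijection identity, and finish with $H(D\mid ABC)\geq m-\epsilon-\delta$; the paper simply packages the intermediate step as $I(ABC;C\oplus D)$ before expanding, whereas you bound the two entropy terms directly, but the arithmetic is identical.
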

\begin{proof}
Equation \eqref{eqotp1} is proved as follows. We have 
\begin{align}
&I(A;B,C \oplus D) \nonumber \\
& = I(A;B) + I(A;C \oplus D|B) \nonumber \\
& \leq I(A;B) + I(ABC;C \oplus D) \label{eqstepd} \\  \nonumber
&  = I(A;B) + H(C \oplus D) - H(C \oplus D|ABC) \\  \nonumber
&  = I(A;B) + H(C \oplus D) - H(D|ABC) \\  \nonumber
& \stackrel{(a)} \leq I(A;B) + H(C \oplus D) - m + \delta+ \epsilon \\
&  \stackrel{(b)} \leq I(A;B)+ \epsilon +\delta,\label{eqstepf}
\end{align}
where $(a)$ holds because $I(D;ABC)\leq \epsilon$ and $H(D) \geq m - \delta$, $(b)$ holds because $H(C \oplus D) \leq m$. Next, we have
\begin{align*}
I(A,C \oplus D;C)
& = I(A;C) + I(C \oplus D;C|A) \nonumber \\
& \stackrel{(a)}\leq I(C \oplus D;AC)  \\  \nonumber
&  \stackrel{(b)} \leq \epsilon +\delta,
\end{align*}
where $(a)$ holds because $I(A;C)=0$, and $(b)$ holds  by choosing $B \leftarrow \emptyset$ in the successive steps between \eqref{eqstepd} and~\eqref{eqstepf}.
\end{proof}

\section{Proof of Lemma \ref{lem1a}} \label{App_lem1}

We have
\begin{align*}
& I(Z_{1} Z_{2} Y^{n}_{1:K} \mathbf{A}_{1:K}; \bar{C}_{1,1:K}   \bar{C}_{2,1:K}   | W_{1:K} ) \\
& =  I(Z_{1} Z_{2}  ; \bar{C}_{1,1:K}   \bar{C}_{2,1:K}   | W_{1:K} )  \\
& \phantom{-}+ I( Y^{n}_{1:K} \mathbf{A}_{1:K}; \bar{C}_{1,1:K}   \bar{C}_{2,1:K}   | Z_{1} Z_{2} W_{1:K} ) \\
& \stackrel{(a)}= \sum_{t=1}^{K} I( Y^{n}_{t} \mathbf{A}_{t}; \bar{C}_{1,1:K}   \bar{C}_{2,1:K}   |Z_1Z_2  Y^{n}_{1:t-1} \mathbf{A}_{1:t-1} W_{1:K} ) \\
& \stackrel{(b)}\leq \sum_{t=1}^{K} H(Y^{n}_{t} \mathbf{A}_{t} |Z_1Z_2   W_{t} ) \\
& \phantom{--} - H( Y^{n}_{t} \mathbf{A}_{t} | \bar{C}_{1,1:K}   \bar{C}_{2,1:K}  Z_1Z_2  Y^{n}_{1:t-1} \mathbf{A}_{1:t-1} W_{1:K} )\\
& \stackrel{(c)}= \sum_{t=1}^{K} H( Y^{n}_{t} \mathbf{A}_{t} |Z_1Z_2   W_{t} )- H( Y^{n}_{t} \mathbf{A}_{t} | \bar{C}_{1,t}   \bar{C}_{2,t} Z_1Z_2 W_{t} ) \\
&  \stackrel{(d)}= \sum_{t=1}^{K} H( Y^{n}_{t} \mathbf{A}_{t} |Z_1Z_2 Z_{1,t} Z_{2,t}  W_{t} )\\
& \phantom{--}- H( Y^{n}_{t} \mathbf{A}_{t} | \bar{C}_{1,t}   \bar{C}_{2,t} Z_1Z_2 Z_{1,t} Z_{2,t} W_{t} ) \\
&  \stackrel{(e)}\leq \sum_{t=1}^{K} H( Y^{n}_{t} \mathbf{A}_{t} |Z_{1,t} Z_{2,t}  W_{t} )\\
& \phantom{--}- H(Y^{n}_{t} \mathbf{A}_{t} | \bar{C}_{1,t}   \bar{C}_{2,t}  Z_{1,t} Z_{2,t} W_{t} ) \\
& = \sum_{t=1}^{K} I( Y^{n}_{t} \mathbf{A}_{t} ;\bar{C}_{1,t}   \bar{C}_{2,t}|Z_{1,t} Z_{2,t}  W_{t} )  \\
& \leq \sum_{t=1}^{K} I(Z_{1,t} Z_{2,t} Y^{n}_{t} \mathbf{A}_{t} ;\bar{C}_{1,t}   \bar{C}_{2,t}|  W_{t} )  \\
& \stackrel{(f)}= 0, 
\end{align*}
where 
\begin{enumerate} [(a)]
    \item holds by the chain rule and because $I(Z_{1} Z_{2}  ; \bar{C}_{1,1:K}   \bar{C}_{2,1:K}   | W_{1:K} ) =0$;
     \item holds by the definition of the mutual information and because conditioning reduces entropy;
      \item holds because $(Y^{n}_{t} ,\mathbf{A}_{t})  - (\bar{C}_{1,t}  , \bar{C}_{2,t}, Z_1,Z_2  , W_{t})- (\bar{C}_{1,1:K}  , \bar{C}_{2,1:K} , Y^{n}_{1:t-1} \mathbf{A}_{1:t-1} W_{1:K})$ forms a Markov chain; 
       \item holds because for any $t \in \llbracket 1 , K \rrbracket$, $(Z_{1,t}, Z_{2,t} )$ is a function of $(Z_1,Z_2)$;  \item holds because conditioning reduces entropy and because $( Y^{n}_{t}, \mathbf{A}_{t}) - ( \bar{C}_{1,t} ,  \bar{C}_{2,t} , Z_{1,t}, Z_{2,t}, W_{t}) - (Z_1,Z_2)$ forms a Markov chain; 
\item holds by the servers' privacy with respect to the client in Line 7 of Algorithm~\ref{algrtoc}.
\end{enumerate}

\section{Proof of Lemma \ref{lem2a}} \label{App_lem2}
We have \eqref{eqtop3}, 
\begin{figure*}
\begin{align*} 
& I(W_{1:K} ; F_{1, \mathcal{L}_1}F_{2, \mathcal{L}_2} | F_{1,Z_1} F_{2,Z_2} ) \\
& \leq  I(W_{1:K} Z_{1,1:K} Z_{2,1:K}; F_{1, \mathcal{L}_1}F_{2, \mathcal{L}_2} | F_{1,Z_1} F_{2,Z_2} ) \\
&\stackrel{(a)} =  I( W_{1:K}; F_{1, \mathcal{L}_1}F_{2, \mathcal{L}_2} | F_{1,Z_1} F_{2,Z_2}Z_{1,1:K} Z_{2,1:K})\\
& =  H( W_{1:K}| F_{1,Z_1} F_{2,Z_2}  Z_{1,1:K}Z_{2,1:K})  -H( W_{1:K} | F_{1, \mathcal{L}_1}F_{2, \mathcal{L}_2}  
   Z_{1,1:K}Z_{2,1:K} ) \\
&\stackrel{(b)} \leq   H( (C_{1,t,i}[Z_{1,t,i}])_{t \in \llbracket 1,L_1-1 \rrbracket \backslash \{ Z_1\} , i\in \llbracket 1,L_2-1 \rrbracket  } | F_{1,Z_1} F_{2,Z_2}  Z_{1,1:K}Z_{2,1:K} ) \\&\phantom{--}+ H( (C_{2,t,i}[Z_{2,t,i}])_{t \in \llbracket 1,L_2-1 \rrbracket \backslash \{ Z_2\} , i\in \llbracket 1,L_1-1 \rrbracket  } | F_{1,Z_1} F_{2,Z_2}  Z_{1,1:K}Z_{2,1:K} )\\
& \phantom{--} -H(  (S_{1,t,i})_{t \in \llbracket 1,L_1-2 \rrbracket, i\in \llbracket 1,L_2-1 \rrbracket  } (S_{2,t,j})_{t \in \llbracket 1,L_2-2 \rrbracket, j\in \llbracket 1,L_1-1 \rrbracket  }  | F_{1, \mathcal{L}_1}F_{2, \mathcal{L}_2}  Z_{1,1:K}Z_{2,1:K}) \\
& \stackrel{(c)}\leq  (L_1-2)(L_2-1) \frac{nR_1}{L_2-1}+ (L_2-2)(L_1-1)\frac{nR_2}{L_1-1}  -H(  (S_{1,t,i})_{t \in \llbracket 1,L_1-2 \rrbracket, i\in \llbracket 1,L_2-1 \rrbracket  } (S_{2,t,j})_{t \in \llbracket 1,L_2-2 \rrbracket, j\in \llbracket 1,L_1-1 \rrbracket  }) \\
& = 0 \numberthis \label{eqtop3}
\end{align*}
\hrulefill
\end{figure*}
where 
\begin{enumerate} [(a)]
    \item holds by the chain rule and because $I(Z_{1,1:K}Z_{2,1:K}; F_{1, \mathcal{L}_1}F_{2, \mathcal{L}_2} | F_{1,Z_1} F_{2,Z_2} ) = 0$; 
    \item holds by definition of $(W_t)_{t \in \llbracket 1 , K \rrbracket}$; 
    \item holds because conditioning reduces entropy, and by independence between  $((S_{1,t,i})_{t \in \llbracket 1,L_1-2 \rrbracket, i\in \llbracket 1,L_2-1 \rrbracket  } ,(S_{2,t,j})_{t \in \llbracket 1,L_2-2 \rrbracket, j\in \llbracket 1,L_1-1 \rrbracket  }) $ and $(F_{1, \mathcal{L}_1},F_{2, \mathcal{L}_2},  Z_{1,1:K},Z_{2,1:K})$.
\end{enumerate}

\section{Proof of Lemma \ref{lem3a}} \label{App_lem3}
We have
\begin{align*}
&\sum_{t=1}^{K}  I( \bar{C}_{1,t}  X^n_{1,t} \mathbf{A}_{t}; Z_{1,t}Z_{2,t}| \bar{C}_{1,1:t-1}  X^n_{1,1:t-1} \mathbf{A}_{1:t-1} ) \\
& \stackrel{(a)}\leq \sum_{t=1}^{K}  H( \bar{C}_{1,t}  X^n_{1,t} \mathbf{A}_{t}) \\
& \phantom{--}- H( \bar{C}_{1,t}  X^n_{1,t} \mathbf{A}_{t} | Z_{1,t}Z_{2,t} \bar{C}_{1,1:t-1}  X^n_{1,1:t-1} \mathbf{A}_{1:t-1} ) \\
& \stackrel{(b)}= \sum_{t=1}^{K}  H( \bar{C}_{1,t}  X^n_{1,t} \mathbf{A}_{t}) - H( \bar{C}_{1,t}  X^n_{1,t} \mathbf{A}_{t} | Z_{1,t}Z_{2,t} ) \\
& = \sum_{t=1}^{K}  I( \bar{C}_{1,t}  X^n_{1,t} \mathbf{A}_{t} ; Z_{1,t}Z_{2,t} ) \\
&  \stackrel{(c)}= 0, \numberthis \label{eqlim1m}
\end{align*}
where 
\begin{enumerate} [(a)]
    \item holds by the definition of mutual information and because conditioning reduces entropy; 
    \item holds by the Markov chain $ (\bar{C}_{1,t} , X^n_{1,t} ,\mathbf{A}_{t} )- (Z_{1,t},Z_{2,t} ) - (\bar{C}_{1,1:t-1},  X^n_{1,1:t-1}, \mathbf{A}_{1:t-1} )$;
    \item holds by the privacy of the client file selection.
\end{enumerate}
Then, we have \eqref{eqtop4}
\begin{figure*}
\begin{align*}
&   I( \bar{C}_{1,1:K}  X^n_{1,1:K} \mathbf{A}_{1:K};\bar{C}_{2,1:K}) \\
& = \sum_{t=1}^{K}  I( \bar{C}_{1,t}  X^n_{1,t} \mathbf{A}_{t};\bar{C}_{2,1:K} | \bar{C}_{1,1:t-1}  X^n_{1,1:t-1} \mathbf{A}_{1:t-1} ) \\
& \leq \sum_{t=1}^{K}  I( \bar{C}_{1,t}  X^n_{1,t} \mathbf{A}_{t};\bar{C}_{2,1:K} Z_{1,t}Z_{2,t}| \bar{C}_{1,1:t-1}  X^n_{1,1:t-1} \mathbf{A}_{1:t-1} ) \\
& = \sum_{t=1}^{K}  I( \bar{C}_{1,t}  X^n_{1,t} \mathbf{A}_{t}; Z_{1,t}Z_{2,t}| \bar{C}_{1,1:t-1}  X^n_{1,1:t-1} \mathbf{A}_{1:t-1} )  + I( \bar{C}_{1,t}  X^n_{1,t} \mathbf{A}_{t};\bar{C}_{2,1:K} | Z_{1,t}Z_{2,t} \bar{C}_{1,1:t-1}  X^n_{1,1:t-1} \mathbf{A}_{1:t-1} ) \\
& \stackrel{(a)} = \sum_{t=1}^{K} I( \bar{C}_{1,t}  X^n_{1,t} \mathbf{A}_{t};\bar{C}_{2,1:K} | Z_{1,t}Z_{2,t} \bar{C}_{1,1:t-1}  X^n_{1,1:t-1} \mathbf{A}_{1:t-1} )\\
& = \sum_{t=1}^{K}  [I( \bar{C}_{1,t}  X^n_{1,t} \mathbf{A}_{t};\bar{C}_{2,t} | Z_{1,t}Z_{2,t}  \bar{C}_{1,1:t-1}  X^n_{1,1:t-1} \mathbf{A}_{1:t-1} ) + I( \bar{C}_{1,t}  X^n_{1,t} \mathbf{A}_{t};\bar{C}_{2,1:K}|Z_{1,t}Z_{2,t}  \bar{C}_{2,t}  \bar{C}_{1,1:t-1}  X^n_{1,1:t-1} \mathbf{A}_{1:t-1} ) ]\\
& \stackrel{(b)}\leq \sum_{t=1}^{K}  [I( \bar{C}_{1,t}  X^n_{1,t} \mathbf{A}_{t};\bar{C}_{2,t}   Z_{1,t}Z_{2,t} ) + I( \bar{C}_{1,t}  X^n_{1,t} \mathbf{A}_{t};\bar{C}_{2,1:K}|Z_{1,t}Z_{2,t} \bar{C}_{2,t}  \bar{C}_{1,1:t-1}  X^n_{1,1:t-1} \mathbf{A}_{1:t-1} ) ]\\
& = \sum_{t=1}^{K}  I( \bar{C}_{1,t}  X^n_{1,t} \mathbf{A}_{t};\bar{C}_{2,t}   Z_{1,t}Z_{2,t} )\\
&= \sum_{t=1}^{K}  I( \bar{C}_{1,t}  X^n_{1,t} \mathbf{A}_{t};\bar{C}_{2,t}|    Z_{1,t}Z_{2,t} ) + I( \bar{C}_{1,t}  X^n_{1,t} \mathbf{A}_{t};   Z_{1,t}Z_{2,t} )\\
& \stackrel{(c)}=  0 \numberthis \label{eqtop4} 
\end{align*}
\hrulefill
\end{figure*}
where
\begin{enumerate}[(a)]
    \item holds by \eqref{eqlim1m}; 
    \item holds because 
\begin{align*}
& I( \bar{C}_{1,t}  X^n_{1,t} \mathbf{A}_{t};\bar{C}_{2,t} | Z_{1,t}Z_{2,t}  \bar{C}_{1,1:t-1}  X^n_{1,1:t-1} \mathbf{A}_{1:t-1} )\\
& \leq H( \bar{C}_{1,t}  X^n_{1,t} \mathbf{A}_{t} ) \\
&\phantom{-}- H( \bar{C}_{1,t}  X^n_{1,t} \mathbf{A}_{t}|Z_{1,t}Z_{2,t}  \bar{C}_{2,t}  \bar{C}_{1,1:t-1}  X^n_{1,1:t-1} \mathbf{A}_{1:t-1} )\\
& = H( \bar{C}_{1,t}  X^n_{1,t} \mathbf{A}_{t} ) - H( \bar{C}_{1,t}  X^n_{1,t} \mathbf{A}_{t}|Z_{1,t}Z_{2,t} \bar{C}_{2,t} )\\
&= I( \bar{C}_{1,t}  X^n_{1,t} \mathbf{A}_{t};\bar{C}_{2,t} Z_{1,t}Z_{2,t} ),
\end{align*}
 where the first inequality holds by the definition of the mutual information and because  conditioning reduces entropy, and the  first equality holds because $(\bar{C}_{1,t} , X^n_{1,t} ,\mathbf{A}_{t})  - (Z_{1,t},Z_{2,t},\bar{C}_{2,t}) -(  \bar{C}_{1,1:t-1},  X^n_{1,1:t-1}, \mathbf{A}_{1:t-1})$ forms a Markov chain; 
    \item holds by the servers' mutual privacy, and the client privacy with respect to the servers. 
 \end{enumerate}

\bibliographystyle{IEEEtran}
\bibliography{bib}

\end{document}